\documentclass[runningheads]{llncs}

\newif\ifanonymous
\anonymousfalse  

\usepackage[T1]{fontenc}
\usepackage{lmodern}
\usepackage{graphicx}
\usepackage{amssymb}
\usepackage{booktabs}
\usepackage{hyperref}
\usepackage{listings}
\usepackage{latexsym, amsfonts, yfonts, bbm}
\usepackage{stmaryrd}
\usepackage{enumerate}
\usepackage{mathrsfs}
\usepackage{amsmath}
\makeatletter
\let\c@lemma\c@theorem

\@ifundefined{c@definition}{}{\let\c@definition\c@theorem }
\@ifundefined{c@proposition}{}{\let\c@proposition\c@theorem }
\@ifundefined{c@corollary}{}{\let\c@corollary\c@theorem }
\@ifundefined{c@remark}{}{\let\c@remark\c@theorem }
\@ifundefined{c@example}{}{\let\c@example\c@theorem }
\makeatother

\hypersetup{
  colorlinks=true,
  linkcolor=blue,
  citecolor=blue,
  urlcolor=blue,
  pdfauthor={},
  pdftitle={}
}

\def\ok#1{\mbox{\raisebox{0ex}[1ex][1ex]{$#1$}}}

\def\re{c.e.}
\def\Imp{\tt Imp}

\def\Aexp{\mathit{a}}
\def\Bexp{\mathit{b}}

\usepackage{thcomp}
\newcommand{\myqed}{\textsc{Q.E.D.}}
\def\wpre{\wp^{\text{ce}}}
\def\wprec{\wp^{\text{dec}}}
\def\grasseb#1{{\llparenthesis\hspace*{0.2ex} #1 \hspace*{0.2ex}\rrparenthesis}}
\def\grasse#1{{\llbracket #1 \rrbracket}}

\newcommand{\li}{\ar@{-}} 
\newcommand{\lp}{\ar@{.}} 
\newcommand{\fp}{\ar@{.>}} 

\def\bbbc{{\mathbb{C}}}

\def\dovetl{{\tt dovetail}}

\newcommand{\Interp}{\mathsf{U}}

\newcommand{\ccL}{\mathscr{L}}
\newcommand{\ccD}{\mathscr{D}}

\newcommand{\scl}{\mathbb{S}}

\newcommand{\Domain}{\mathbb{D}}

\newcommand{\NF}{{\mathcal{T}}}
\newcommand{\pfunction}[3]{\ok{\varphi^{#2\, #3}_{#1}}}

\newcommand{\auco}[2]{\ok{\rho^{#1}_{#2}}}

\newcommand{\compl}[1]{\ok{\overline{#1}}}

\newcommand{\uco}{\ensuremath{\mathrm{uco}}}

\newcommand{\Var}{\ensuremath{\mathit{Var}}}

\newcommand{\ra}{\rightarrow}

\newcommand{\Ra}{\Rightarrow}

\newcommand{\Lra}{\Leftrightarrow}

\def\defi{\triangleq}
\newcommand{\range}{\textit{rng}}
\newcommand{\dom}{\textit{dom}}

\DeclareMathOperator{\Abs}{Abs}

\newcommand{\integer}{\mathbb{Z}}
\newcommand{\ud}{\ok{\:\triangleq\:}}

\def\ok#1{\mbox{\raisebox{0ex}[1ex][1ex]{$#1$}}}

\def\predC{\mathscr{C}}
\def\lfp{{\sf lfp\/}}
\def\gfp{{\sf gfp\/}}

\def\comp{\hbox{\;\footnotesize$\circ$\normalsize}}

\def\defemb#1#2{\expandafter\def\csname #1\endcsname
                              {\relax\ifmmode #2\else\hbox{$#2$}\fi}}
 
\def\2c-math#1#2{{\par\medskip\noindent ${#1}$
                      \par\smallskip
                        \noindent\hspace*{\fill} ${#2}$}
                           \\[10pt]}
\def\cAA{\ensuremath{A}}
                           
\defemb{aF}{{\cal F}^\cAA}
\defemb{sF}{{\cal F}^{\star}}

\defemb{cB}{{\cal B}}
\defemb{cC}{{\cal C}}
\defemb{cD}{{\cal D}}
\defemb{cE}{{\cal E}}
\defemb{cF}{{\cal F}}
\defemb{cG}{{\cal G}}
\defemb{cH}{{\cal H}}
\defemb{cI}{{\cal I}}
\defemb{cJ}{{\cal J}}
\defemb{cK}{{\cal K}}
\defemb{cL}{{\cal L}}
\defemb{cM}{{\cal M}}
\defemb{cN}{{\cal N}}
\defemb{cO}{{\cal O}}
\defemb{cP}{{\cal P}}
\defemb{cQ}{{\cal Q}}
\defemb{cR}{{\cal R}}
\defemb{cS}{{\cal S}}
\defemb{cT}{{\cal T}}
\defemb{cU}{{\cal U}}
\defemb{cV}{{\cal V}}
\defemb{cW}{{\cal W}}
\defemb{cY}{{\cal Y}}
\defemb{cX}{{\cal X}}
\defemb{cZ}{{\cal Z}}
\defemb{cHys}{{\cH_{\cY}^{\cS}}}
\defemb{cHy}{{\cH_{\cY}}}
\defemb{ld}{\dashv}
\defemb{rd}{\vdash}
\defemb{cGH}{{\cal GH}}
\defemb{cGWP}{{\cal GWP}}
\defemb{tL}{{\tt L}}
\defemb{tH}{{\tt H}}
\defemb{tT}{{\tt T}}
\defemb{tS}{{\tt S}}

\newcommand{\sset}[2]{\{#1  ~|~ #2\}}

\newcommand{\nat}{\omega}

\def\tuple#1{\langle #1 \rangle}

\makeatletter
\newcommand{\pto}{}
\newcommand{\pgets}{}
\DeclareRobustCommand{\pto}{\mathrel{\mathpalette\p@to@gets\to}}
\DeclareRobustCommand{\pgets}{\mathrel{\mathpalette\p@to@gets\gets}}
\newcommand{\p@to@gets}[2]{%
  \ooalign{\hidewidth$\m@th#1\mapstochar\mkern5mu$\hidewidth\cr$\m@th#1\to$\cr}%
}
\makeatother

\newcommand{\defined}{{\downarrow}}

\makeatletter
\newcommand{\TODO}{\@ifnextchar[{\@TODO}{\@@TODO}}
\long\def\@TODO[#1]#2{\bgroup\color{red}\ifmmode\else\itshape\fi TODO[#1]: #2\egroup}
\long\def\@@TODO#1{\bgroup\color{red}\ifmmode\else\itshape\fi TODO: #1\egroup}
\makeatother

\begin{document}

\title{Elusive but Coverable: The Recursion-Theoretic Structure of Complete Abstract Interpretations
}
\titlerunning{The Recursion-Theoretic Structure of Complete Abstract Interpretations}

\ifanonymous
  \author{Anonymous Authors}
  \authorrunning{Anonymous Authors}
  \institute{Anonymous Institution}
\else
  \author{Nicklas Carpenter \and Roberto Giacobazzi}
  \authorrunning{}
  \institute{
    University of Arizona\\
    \email{$\{$npcarpenter\,|\,giacobazzi$\}@$arizona.edu}
  }
\fi 

\maketitle

\begin{abstract}
We study local completeness and incompleteness of abstract interpretations from a recursion-theoretic perspective. Local completeness weakens global completeness and captures the absence of precision loss for a specific precondition: abstract computation yields exactly what is obtained by abstracting the corresponding concrete computation. This enables compositional reasoning and rules out false positives in verification.
We characterize the distinction between static and dynamic program analysis in terms of uniformly decidable operations and observe that the latter is uniformly decidable only for trivial abstractions. We then prove that the class of programs inducing a predicate transformer that is locally complete for a given non-trivial abstract domain is elusive in a precise recursion-theoretic sense: it is a productive set, hence not computably enumerable, and, under mild hypotheses, the same holds for its complement. In particular, the first class lies in $\Pi^0_2$
and the second in $\Sigma^0_2$.
Unlike the usual examples of $\Pi^0_2$ properties, we show that the classes of locally complete programs admit decidable coverings. This makes it possible to construct, via program transformation, an effective enumeration of a representative subset of programs that entirely covers this class --- capturing from the outside a class that eludes enumeration from within.
\keywords{Abstract interpretation \and completeness \and program analysis \and program verification \and computability theory}
\end{abstract}

\section{Introduction}

In any theory, the study of limit cases is a powerful means of exposing its conceptual structure. They mark the boundary between possibility and impossibility, distinguish essential assumptions from accidental ones, and provide idealized reference points for understanding more practical instances. Complete abstract interpretations play this role within abstract interpretation. 

\smallskip

Abstract interpretation is a general theory for specifying approximate semantics of programming languages \cite{CC77}, including program analysis, program logics, and program transformations as special cases. 
This is achieved by designing an approximate (abstract) interpreter from the semantics of a given computational system (e.g., a programming language) and an abstraction of its state space. The abstraction, specified by the so called {\em abstract domain\/}, later denoted $\cAA$, plays a key role in the abstract interpretation construction. It specifies precisely what information has to be retained and what can be abstracted away, with the specific goal that interpreting in the simplified abstract domain yet keeps as much information as possible to achieve our goals, e.g., proving program correctness. 

\smallskip

By construction, abstract interpretation is sound \cite{CC77}, meaning that if the approximate abstract interpretation of a program is correct with respect to a specification then also its concrete semantics satisfies the specification.
The converse is rare and corresponds to the limit, and ideal, situation where no loss of precision is cumulated in the abstract interpretation with respect to what can be obtained by abstracting the concrete computation. When this happens we have {\em completeness} \cite{CC79,GRS00}. 
Completeness clarifies which aspects of a computation can be abstracted away while preserving the ability to reason precisely about its properties.
In program verification, program analysis and program logics completeness guarantees the absence of false positives, therefore making it possible to have a perfect matching between concrete and approximate correctness proofs. It is about three decades that completeness has been studied as a field by its own \cite{GR97b}. 
It has been extensively studied in the context of abstraction refinement for program analysis \cite{GRS00,GQ01,BruniGGR22}, information flow security by (abstract) non-interference \cite{GMadj10}, code protection \cite{G08}, semantic properties of abstract interpreters, such as extensionality and referential transparency \cite{BruniGGGP20}, in the combination of correctness and incorrectness logics of programs \cite{BruniGGR21,BruniGGR23}, in the coordination of inductive and co-inductive-{\em up-to\/} techniques for proving properties of greatest fixed point computations \cite{BGGP18} and, recently, in the context of the design of {\em best correct approximations} (bca) of predicate transformers \cite{GiacobazziR25}.

\subsection{The problem}
On one side standard computability is a theory of sets and functions indexed on programs enjoying all beautiful properties such as extensionality and referential transparency \cite{odifreddi}. On the other side 
it has been proved that abstract interpretation is inherently intensional in nature \cite{BruniGGGP20}, i.e., its quality strictly depends upon the way the code is written. This is indeed a direct consequence of the lack of compositionality \cite{GiacobazziR25} which is guaranteed only when we can suppress intermediate abstractions in the composition of abstract functions, as it happens when the outer function is complete {\em locally\/} to the output of the inner one. It is precisely completeness in this localized version that plays a central role in reconciling computability theory with abstract interpretation. 

\smallskip

Local completeness is a weaker form of completeness introduced in \cite{BruniGGR21} and further developed in \cite{BruniGGR23,BruniGGR22} in the contexts of LCL (Local Completeness Logic), combining correctness and incorrectness logic, and AIR (Abstract Interpretation Refinement), generalizing CEGAR-style refinement to arbitrary abstract domains.
Unlike global completeness, traditionally studied in the literature (e.g., see \cite{GRS00}), which requires completeness with respect to all possible inputs, local completeness requires completeness only for a fixed input property. When it holds along a computation trace in the intermediate junctions of post/pre conditions, it guarantees that no loss of precision is propagated along that trace. This localized perspective has proved particularly useful for analyzing the precision of abstract interpreters defined inductively over the syntax of programs.
Contrary to global completeness studied in \cite{GiacobazziLR15}, local completeness does not have a recursion-theoretic characterization. Our contribution, which is primarily theoretical in nature, wants to fill this gap.

\subsection{Main contribution}
%


We first analyze decidable abstract domains as uniform transformations. 
For decidable abstract domains, whose elements represent decidable program properties as commonly used in program analysis and verification, we show that uniformity precisely separates dynamic from static program analysis. Here, uniformity is the ability to compute effectively the abstract property associated with a concrete set from any program enumerating that set, independently of how the set is generated or used. This establishes the boundary between merely abstracting the result of an interpreter and performing abstract interpretation.

\smallskip

In this context we study the complexity in the arithmetical hierarchy of the classes of all programs admitting an abstract semantics which is locally complete for a given input property $I$ in a fixed abstract domain $\cAA$. We prove that when $\cAA$ is not trivial, i.e., it is not the identical abstraction or the abstraction which collapses all properties into a single {\em don’t know} value,
this class, later denoted $\bbbc(\cAA,I)$, is bi-productive, i.e., it is productive as well as its complement, the latter under mild hypothesis. Only when $I=\varnothing$ the complement class $\compl{\bbbc(\cAA,I)}$, called the {\em incompleteness class}, is computably enumerable. In particular we prove that for continuous abstract domains and input property $I\neq\varnothing$ the classes $\bbbc(\cAA,I)$ and $\compl{\bbbc(\cAA,I)}$ are respectively in $\Pi^0_2$ and $\Sigma^0_2$ in the arithmetic hierarchy. 

\smallskip

The above results show the intrinsic difficulties associated with completeness and incompleteness persist even in their weaker---local, formulation. Since productive sets are by definition not computably enumerable, neither local completeness nor local incompleteness admits an effective enumeration that fulfills respectively $\bbbc(\cAA,I)$ and $\compl{\bbbc(\cAA,I)}$. Actually these results show that these two classes are recursively inseparable, namely no decidable set $C$ that contains all elements of $\bbbc(\cAA,I)$ while excluding all elements of $\compl{\bbbc(\cAA,I)}$; that is, no computable yes/no procedure can perfectly separate the positive cases in $\bbbc(\cAA,I)$ from the negative cases in $\compl{\bbbc(\cAA,I)}$.
While productivity rules out any effective enumeration of these classes as sets of programs, we prove that $\bbbc(\cAA,I)$ admits a computably enumerable (and hence also a decidable) covering. The idea here is that we do not try to enumerate exactly all and only the programs in $\bbbc(\cAA,I)$. Instead, we provide a fix-point characterization in terms of effective program transformations\footnote{We use the term {\em program transformation\/} in a non-standard sense: we allow a transformation to modify not only the syntactic form of a program, but also its semantics.} of a representative family of programs whose semantic behavior covers the entire class. The covering is allowed to miss many programs, as long as it still contains enough representatives to capture all relevant behaviors in $\bbbc(\cAA,I)$. The proof follows the structure of similar proofs in abstract computational complexity (e.g., see \cite{Blum67}): here to be cancelled by an effective program transformation are the cases of incompleteness.

\subsection{Related works}

The most related paper is \cite{GiacobazziLR15} where the authors firstly analyzed the problem of completeness from the perspective of the programs for which an abstract domain is complete instead of the abstract domains that induce completeness for a given program, as previously extensively done in the literature. This naturally leads to the notion of completeness class. We extend their approach to the weaker case of local completeness and we prove that local and global completeness share the same complexity in the arithmetic hierarchy. Moreover we characterize a {\re} (computably enumerable) covering of the class of local complete programs for a non trivial abstract domain and prove the complexity of transformations mapping arbitrary programs to the complement class of locally incomplete programs. The covering provides the very first characterization of completeness in fix-point form, while the transformation making programs locally incomplete  provides an effective lower bound to code obfuscating transformations.

In \cite{DMonniaux23} the author presents an important practice-oriented (and personal) account of completeness in static analysis by abstract interpretation, distinguishing the completeness of an abstract domain from the completeness of the method used to compute invariants. The author affirms that completeness matters because incompleteness methods may be hard to control and unpredictable: for example, widening can fail to find invariants that exist in the chosen abstract domain, and small or seemingly irrelevant program changes may alter the analysis outcome. The author concludes that complete methods, when available, provide robustness, unique well-defined results, better testability, and clearer complexity questions. 
With respect to \cite{DMonniaux23} we consider only imprecise transfer functions and the lack of compositionality due to combinations of abstractions as possible sources of incompleteness. Interestingly the productivity of $\bbbc(\cAA,I)$ is consistent with the undecidability of the existence of suitable invariants. This connection opens interesting future developments on the expressivity of abstract domains to represent program invariants.

In \cite{GiacobazziM16} the authors studied semantic transformations that minimally modify a semantic function to make it globally complete for a fixed abstract domain. These results can be seen as a model-based attempt to cover the class of all predicate transformers that are globally complete. The limit of that approach is in its pure model-based nature: it transforms the semantics to achieve completeness, not the code that computes that semantics. 
We instead attack the problem from a recursion-theoretic perspective, namely by means of program transformations. By characterizing effective program transformations that remove local incompleteness we are able to precisely identify a computably enumerable covering of $\bbbc(\cAA,I)$, hence achieving local completeness by code repair.

\section{Preliminaries}
In this section we introduce some mathematical notation and background in recursion theory and abstract interpretation. For the first the reader can refer to \cite{odifreddi,soare,rogers}. For the latter see \cite{CC77,CC79,Cousotbook}.

\subsection{Sets and functions}
Throughout this work, $\nat$ will denote the set of all natural numbers and $\aleph_0$ its cardinality. 
Given two sets $S$ and $T$, $\wp(S)$ denotes the powerset of $S$, $S \smallsetminus T$ denotes the set-difference between $S$ and $T$, 
$\compl{S}$ denotes the complement of $S$ with respect to the relevant universe of 
discourse, and $S \subsetneq T$ denotes strict inclusion. We use exponents to
represent the repeated Cartesian product of a set with itself, so
\(S \times S = S^2 \) and \(S^{n+1} = S^n \times S\) for all \(n \geq 2\).

A function \(f\) is introduced $f:S\ra T$ if \(f\) is a total function from \(S\)
to \(T\) or $f: S\pto T$ if it is a partial function. For any \(x \in S\), we write
$f(x){\downarrow}$ to signify that \(f\) is defined for the particular \(x\) and
$f(x){\uparrow}$ to signify that $f(x)$ is not defined. The domain and range of $f$ are, respectively, the sets
\[\dom(f) \ud \sset{x \in S }{ f(x){\downarrow}}~~\mbox{and}~~\range(f)\ud\sset{f(x)}{x \in S\cap \dom(f)}.\] 
Given a set $X\subseteq S$, the image of \(f\) on \(X\) is
\[\ok f(X) \ud \{f(x)~|~x\in X\cap \dom(f)\}.\]
Two partial functions $f,g: S\pto T$ are extensionally equivalent, denoted $f\cong g$, if $\dom(f)=\dom(g)$ and \(f(x) = g(x)\) for all \(x \in \dom(f) = \dom(g)\).

Occasionally, we define functions using $\lambda$-notation $\lambda x \st f(x)$; 
often, this notation is used when we want to emphasize the arguments of the function.
For any two functions $f : S \pto T$ and $g : T \pto U$, $g \circ f : S \pto U$, we write
\((\fcomp{f}{g})(x)\) to denote the composition of \(f\) and \(g\) where
\((\fcomp{f}{g})(x)\defined\) when both $g(x)\defined$ and $f(g(x))\defined$; otherwise,
$(f \circ g)(x){\uparrow}$. 

\subsection{Basic Order Theory}

A set $L$ endowed with a partial order relation $\leq$ is called a poset and 
is denoted $\tuple{L,\leq}$. A \emph{chain} of the poset $\tuple{L,\leq}$ is a 
sequence of elements in $L$ totally-ordered under the relation \(\leq\).
A poset $\tuple{L,\leq}$ is a \emph{lattice} whenever binary least upper bounds (lubs)
$x \vee y$ and greatest lower bounds (glbs) $x \wedge y$ exist for all  $x,y \in L$. 
A lattice $\tuple{L,\leq}$ is \emph{complete} if every subset $X \subseteq L$ (including
the empty set) has a unique lub $\bigvee X$ and glb $\bigwedge X$. We write $\tuple{L, 
\leq, \vee, \wedge, \top, \bot}$ to denote the complete lattice $L$ with partial order
$\leq$, lub $\vee$, glb $\wedge$, greatest element (top) $\top$, and least element
(bottom) $\bot$. A poset $\tuple{L,\leq}$ is an $\omega$-cpo if $L$ contains the lubs of all $\omega$-chains---chains that can be indexed by the natural numbers.

If $\tuple{L,\leq}$ is a poset and $f,g:S \ra L$, then \(f\) and \(g\) can be
ordered with respect to the poset; we write $f \sqsubseteq g$ if
$f(x) \leq g(x)$ for all $x \in S$. If $L$ is a (complete) lattice 
then $\tuple{S\ra L,\sqsubseteq}$ is a (complete) lattice; the operators
$\sqcup$ and $\sqcap$ are used to denote, respectively, the lub or glb of \(L\)
or any of its chains or subsets. A function $f : L_1 \ra L_2$ between complete lattices is additive (co-additive) 
if for all $Y \subseteq L_1$, $f(\vee_{L_1} Y ) = \vee_{L_2} f(Y)$ ($f(\wedge_{L_1} Y ) = \wedge_{L_2} f(Y ))$. Also, $f$ is continuous (co-continuous) when $f$ preserves lubs (glbs) of chains in $L_1$. Recall that any monotone function $f:L \ra L$ on a complete lattice $L$ 
always has least and greatest fix-points, denoted $\lfp(f)$ and $\gfp(f)$
respectively. If $f:L \ra L$ is continuous then  
$\lfp(f)=\bigvee_{n\in\nat}f^n(\bot)$ where \(f^n\) is defined inductively so that 
$f^0(x)\ud x$ and $f^{n+1}(x)\ud f(f^n(x))$ for all \(n \in \nat\). 

\subsection{Basic recursion theory}\label{comp-sec}

\subsubsection{Computable functions.}
Partial computable functions are (possibly partially-defined) functions that can
be computed by a Turing Machine. We assume a surjective enumeration
$\ccL$, also called a {\em programming system\/}, of partial computable functions
of $n$ arguments and $m$ outputs for any $n,m\in\nat$. For any $e\in\nat$, the
function $\pfunction{e}{n}{m}$ denotes a partial computable function of index $e$ taking $n$ arguments and producing $m$ outputs. Here, $e$ represents the code of a unique Turing machine that computes $\pfunction{e}{n}{m}$.
Following standard convention, we assume the partial functions $\pfunction{e}{n}{m}$ are defined over the infinite denumerable domains $\Domain^n$ and $\Domain^m$
(i.e., for all $n\in\omega:\,|\Domain^n|=\aleph_0$) where \(\Domain\) includes
a representation of all natural numbers and the programming system $\ccL$ (i.e.,
$\nat \subseteq \Domain$ and $\ccL\subseteq\Domain$). Under the convention, the
partial computable functions are $\pfunction{e}{n}{m}: \Domain^n \pto \Domain^m$,
for all $e\in\ccL$ and all $n,m\in\nat$. In practice, we omit $n$ and/or $m$ when
the number of arguments and outputs of function is irrelevant or clear from context. 

A programming system $\ccL$ is considered {\em acceptable\/} if 
it is {\em universal\/}; that is, if it includes a partial computable function $\Interp:\ccL\times\Domain\pto\Domain$ such that if $\{\psi_i\}_{i\in\omega}$ is an enumeration of all partial computable functions then $\Interp(i,x)=\psi_{i}(x)$, and 
there exists a total computable function $t:\omega\ra\ccL$ such that $\varphi_{t(i)}=\psi_{i}$.
Because acceptable enumerations are isomorphic by Rogers' isomorphism, we can identify the index $e\in \ccL$ as the corresponding program in the given programming system $\ccL$, which is usually a programming language. Hence 
$\pfunction{e}{}{}$ is the function (semantics) computed by the program $e\in\ccL$.
A set of programs $S\subseteq\ccL$ is an {\em extensional property\/} of $\ccL$ if 
$e\in S~\wedge~\pfunction{e}{}{}\cong\pfunction{i}{}{}~\Ra~i\in S$. Two sets $A,B\subseteq\ccL$ are extensionally equivalent if 
$\sset{\pfunction{e}{}{}}{e\in A}=\sset{\pfunction{e}{}{}}{e\in B}$. $A$ is a {\em covering\/} (a.k.a, representation) of 
$B$ when $A$ and $B$ are extensionally equivalent and $A\subsetneq B$.

\subsubsection{Computability.}
A set $S\subseteq \Domain$ is computably enumerable (\re) if there exists a partial computable function $\pfunction{e}{}{}$ such that $S=\dom(\pfunction{e}{}{})$. It is known that $S$ is {\re} iff $S = \pfunction{e}{}{}(\Domain)\ud\range(\pfunction{e}{}{})$ for some total computable function $\pfunction{e}{}{}$.
For a partial computable function $\pfunction{e}{}{}$ we denote $W_e = \dom(\pfunction{e}{}{})$. $S\subseteq \Domain$ is {\em decidable} if and only if $S$ and $\compl{S}$ are \re.
The set of all {\re} sets in $\Domain$ is $\wpre(\Domain)\ud \sset{W_e \in \wp(\Domain)}{e\in\ccL}$. 
It is known \cite[Union Theorem~1.9, Chapter~II]{soare} 
that $\tuple{\wpre(\Domain),\subseteq}$ is 
a distributive lattice with $\varnothing$ and $\Domain$ as, respectively, bottom and top elements and that
the set of decidable sets $\wprec(\Domain) \ud \sset{S\in \wp(\Domain)}{
S, \compl{S}\in \wpre(\Domain)}$
is a Boolean algebra $\ok{\tuple{\wprec(\Domain),\subseteq}}$. 
A sequence $\ok{\{V_e\}_{e\in\nat}\subseteq\wpre(\Domain)}$ is {\em uniformly {\re}} if there exists a total computable function $f:\nat\ra\nat$ such that $V_e=W_{f(e)}$. Such sequence $\{W_{f(e)}\}_{e\in\nat}$ is uniformly decidable if the set $\sset{(x,e)}{x\in W_{f(e)}}$ is decidable.

\smallskip

It is known that for any $n\in\nat$ there exists a primitive computable predicate $\NF$ of $n+2$ arguments such that 
for each $e\in\ccL$ and $x\in\Domain$: $\pfunction{e}{n}{}(x){\downarrow} \Leftrightarrow \exists t.\NF(e,x,t)$ \cite{kleene38}. Here $\NF(e,x,t)$ may represent the primitive computable predicate which holds true if the program $e\in\ccL$ has terminated its computation in $t$-steps in some model of computation for the programming system $\ccL$, yet returning the value $\pfunction{e}{n}{}(x)$. 
Given $e\in\ccL$ ant $t\in\omega$ we denote by $\dovetl(e,t)\subsetneq\Domain$ the 
set of all inputs $x\in \Domain$ determined in $t$-steps of the dovetail procedure such that $\pfunction{e}{}{}(x){\downarrow}$. It is known that if $e\in\ccL$ then $W_e=\cup_{t\in\omega}\dovetl(e,t)$.

\smallskip

Not all elements in $\wp(\Domain)$ are \re: 
both $\wpre(\Domain)$ and $\wprec(\Domain)$ are denumerable and $\wprec(\Domain)\subsetneq\wpre(\Domain)\subsetneq\wp(\Domain)$.
A set $S\in\wp(\Domain)$ is {\em creative\/} if it is {\re} and its complement $\compl{S}=\Domain\smallsetminus S$ is {\em productive\/} \cite{Post44,Myhill55}, i.e., there exists a total computable function $\pfunction{e}{}{}:\ccL\ra\Domain$ such that
$\forall x\in\ccL. \, W_x\subseteq \compl{S}~\Rightarrow~\pfunction{e}{}{}(x)\in\compl{S}\smallsetminus W_x$.
It is clear that all productive sets are not {\re}. It is also known that, while the complement of a creative set is always productive, the complement of a productive set may be productive, e.g., $\sset{e\in \ccL}{\mbox{$W_e\in\wprec(\Domain)$}}$ \cite{Dekker55}. It is also known that creative sets are {\em complete\/} in $\wpre(\Domain)$, i.e., $S\in \wp(\Domain)$ is creative iff $S\in\wpre(\Domain)$ and
for any $A\in\wpre(\Domain)$, $A\preceq_f S$. Here for $A\in \wp(\Domain)$ and $S\in \wp(\Domain)$, $A\preceq_f S$ denotes {\em many-to-one reducibility}, i.e., the existence of a total computable function $f:\Domain \ra \Domain$ such that for all $x\in \Domain$: $x\in A~\Leftrightarrow~f(x)\in S$. It is known that if $S$ is productive and $S\preceq_f X$ then also $X$ is productive. It is known that $\ok{K\ud\sset{e}{\pfunction{e}{}{}(e){\downarrow}}\in\wpre(\Domain)}$ is creative and therefore
$\ok{\compl{K}}$ is productive.

\smallskip

The Kleene arithmetical hierarchy is particularly important to compare properties of programs. 
We denote by $\Sigma^0_0=\Pi^0_0=\Delta_0$ the set of all decidable sets, i.e., $\Sigma^0_0=\Pi^0_0=\Delta_0 = \wprec(\Domain)$. 
For $n\geq 1$ we define when a set $A\in \wp(\Domain)$ is arithmetical as follows:
\begin{itemize}
\item $A\in\Sigma^0_n$ if there exists a decidable predicate $R(x,y_1,\ldots,y_n)\subseteq \Domain^{n+1}$:
\begin{align*}
x\in A~\Lra~\exists y_1,\forall y_2\ldots Q y_n.\;& R(x,y_1,\ldots,y_n)\\
&\mbox{where $Q=\exists$ if $n\in 2\nat+1$, and $Q=\forall$ if $n\in 2\nat$.}
\end{align*}
\item $A\in\Pi^0_n$ if there exists a decidable predicate $R(x,y_1,\ldots,y_n)\subseteq \Domain^{n+1}$:
\begin{align*}
x\in A~\Lra~\forall y_1,\exists y_2\ldots Q y_n.\;& R(x,y_1,\ldots,y_n)~\\
&\mbox{where $Q=\forall$ if $n\in 2\nat+1$, and $Q=\exists$ if $n\in 2\nat$.}
\end{align*}
\item $A\in\Delta_n$ if $A\in\Sigma^0_n\cap\Pi^0_n$.
\end{itemize}
We know $A\in\Sigma^0_n~\Lra~\compl{A}\in\Pi^0_n$, 
$A\in\Sigma^0_n\cup\Pi^0_n\Rightarrow\forall m>n.\; A\in\Delta_{m}=\Sigma^0_m\cap\Pi^0_m$, 
$B\preceq_f A~\wedge~ A\in\Sigma^0_n\Rightarrow B\in\Sigma^0_n$, and
 $\ok{R\in\Sigma^0_{n>0}}~ \Ra~ \sset{x}{\exists y.\;R(x,y)}\in\Sigma^0_n$ \cite{rogers}.

\subsubsection{Program semantics.}
Although in the following $\ccL$ can be any acceptable programming system, we sometimes consider $\ccL=\Imp$, where $\Imp$ is a simple {\tt while}-programming language with arithmetic ${\tt Exp}$ and Boolean ${\tt BExp}$ expressions (see \cite{winskel}), whose syntax is as follows: 
  \begin{align*}
   {\tt Exp} \ni a  ::= &\; v\in\Domain\mid x\in \Var\mid  f(a)\mid W \\
   {\tt BExp} \ni b ::= &\; {\tt true}\mid {\tt false} \mid a =a \mid a > a\mid b\wedge b\mid\neg b\mid R\\
    {\Imp}\ni p ::= &\;{\tt skip}\mid x:=\Aexp\mid p;p\mid 
    {\tt if} ~b~ {\tt then} ~p~ {\tt else} ~p \mid 
    {\tt while} ~b~ \{ ~p~\}
   \end{align*}
where $f$ ranges over partial computable functions, $R$ ranges over decidable predicates and
$W$ ranges over arbitrary decidable sets of elements in $\Domain$. 
In this case $\Domain$ contains basic values (e.g., integers in $\integer$, booleans in $\{{\tt true, false}\}$, finite sets, etc.). Let $\scl$ denotes the set of all stores $\sigma:\Var\ra\Domain$, i.e., total computable functions that assign values in $\Domain$ to finite vectors of variables in $\Var$. When $S\subseteq\scl$ is a set of stores then a predicate transformer semantics of a program $p\in\Imp$ can be inductively defined as usual as follows $\grasse{p}: \wp(\scl)\ra \wp(\scl)$ where:
\begin{align*}
\grasse{{\tt skip}}S &\defi S\\
\grasse{x:=\Aexp}S &\defi \{\ \sigma[x\mapsto \grasse{\Aexp}\sigma] \mid \sigma \in S\ \}\\
\grasse{p_1 ; p_2 } S &\defi \grasse{p_2}(\grasse{p_1}S)\\
\grasse{{\tt if}\ \Bexp\ {\tt then}\ p_1\ {\tt else}\ p_2}S &\defi  \grasse{p_1}(\grasse{\Bexp}S) \cup \grasse{p_2}(\grasse{\neg\Bexp}S)\\
\grasse{{\tt while}\ \Bexp\  \{\ p\ \} }S &\defi \grasse{\neg \Bexp}\big(\lfp (\lambda T.\: S \cup \grasse{p}(\grasse{\Bexp}T))\big).
\end{align*}
In this case
$\grasse{b}S=\sset{\sigma\in S}{\sigma\models b}$, where $\sigma\models b$ holds when the store $\sigma$ viewed as a substitution satisfies $b$, and $\grasse{a}S=\sset{\grasseb{a}\sigma}{\sigma\in S}$, where $\grasseb{a}\sigma\in\Domain$ is the value that the expression $a$ can take with store $\sigma$. We assume the total computable functions ${\tt input}:\Var\times\Domain\ra\scl$ and ${\tt out}:\Var\times\scl\ra\Domain$ such that 
${\tt input}(x,v)$ builds the input store mapping $x$ to $v$ and
${\tt out}(x,\sigma) = \sigma(x)$, where $x$ is some output vector of variables. We assume that the output vector of variables belongs to the set of free variables of the program $e$, devoted $FV(e)$, and it will be omitted when this is obvious by the context or not relevant.
For any $e\in\Imp$, we have then $\pfunction{e}{}{}\ud \lambda x.\,{\tt out}(FV(e),\grasse{e}(\{{\tt input}(FV(e),x)\}))$. 

\subsection{Abstract interpretation} \label{cc-sec}

\subsubsection{Abstract domains.}
In standard Galois connection based abstract interpretation~\cite{CC77,CC79}, 
abstract domains (also called abstractions) are specified by Galois connections/insertions
(GCs/GIs for short). 
Concrete and abstract domains are usually complete lattices, resp.\ $\tuple{C,\leq_C}$ and
$\tuple{A,\leq_A}$,
which are related by abstraction and concretization maps
$\alpha:C\ra A$ and $\gamma:A \ra C$ that give
rise to a GC $(\alpha,C,A,\gamma)$ ($\tuple{\alpha,\gamma}$ for short), that is,
for all $a\in A$ and $c\in C$,
$\alpha(c) \leq_A a \Lra c \leq_C \gamma(a)$. 
A GC is a GI when $\alpha\circ\gamma=\lambda x.x$. In the following we assume GIs only.
Let us recall some basic properties of a GI
$(\alpha,C,A,\gamma)$: (1)~$\alpha$ is additive and  $\gamma$ is co-additive;  
(2)~$\gamma\circ \alpha: C\ra C$ is an upper closure operator, namely, it is 
a monotone, idempotent and increasing (i.e., $x \leq \gamma(\alpha(x))$) function; (3) if $\mu:C\ra C$ is an upper closure operator
then $(\mu,C,\mu(C),\lambda x.x)$ is a GI. 
If $\alpha:C\ra A$ is an additive function then it induces a GC $(\alpha,C,A,\alpha^+)$ 
where the concretization $\alpha^+:A \ra C$ is defined as right-adjoint of $\alpha$, i.e., 
$\alpha^+(a) \defi \vee_C \{c\in C  ~|~ \alpha(c) \leq_A a\}$. Dually, 
if $\gamma:C\ra A$ is a co-additive function then $(\gamma^-,C,A,\gamma)$ is a GC where
$\gamma^- \defi \lambda c. \wedge_A \{a\in A  ~|~ c \leq_C \gamma(a)\}$ is the left-adjoint of $\gamma$. 

\smallskip

We use $\Abs(C)$ to denote all the possible abstractions of a concrete domain $C$, where
$A\in \Abs(C)$ means that $A$ is an abstract domain of $C$ defined by some GI which is left unspecified, 
and $\auco{}{\cAA}$ is its corresponding upper closure operator.
If $A_1,A_2\in \Abs(C)$ then $A_1$ is equivalent to $A_2$, denoted by $A_1\sim A_2$, when 
$\gamma_{A_1}(A_1) = \gamma_{A_2}(A_2)$. The quotient  $\Abs(C)_{/\sim}$ is
called the lattice of abstractions because it turns out to be a complete lattice when $C$ is a cpo \cite{ran99}
w.r.t.\ the relative precision ordering: 
$A_1 \sqsubseteq A_2$ iff for any $c\in C$, $\gamma_{A_1}(\alpha_{A_1}(c)) \leq_C  \gamma_{A_2}(\alpha_{A_2}(c))$. 
$\Abs(C)_{/\sim}$ is indeed isomorphic to the lattice of upper closure operators $\uco(C)$ on the concrete lattice $\tuple{C,\leq_C}$.
Thus, $A_1 \sqsubseteq A_2$ means that $A_1$ is a more precise abstraction than $A_2$, or, equivalently, that $A_2$ abstracts $A_1$.  
An abstract domain 
$A\in \Abs(C)$ is called {\em trivial\/} when 
$A$ is either the least or the greatest abstract domain in $\tuple{\Abs(C)_{/\sim},\sqsubseteq}$, respectively $\auco{}{\cAA} = \lambda x.x$ (called the identical abstraction) or $\auco{}{\cAA} =
\lambda x.\top_C$, called the top abstraction and denoted $\cAA=\top$. We say that $\cAA$ is continuous when $\auco{}{\cAA}$ is continuous.


\subsubsection{Soundness and Completeness.}
Let $f:C\ra C$ be a
concrete monotone function---for simplicity 
we consider unary functions---and let
 $\ok{f^\sharp:A \ra A}$ be a corresponding monotone abstract function
 defined on  $\cAA\in \Abs(C)$ with GI $\tuple{\alpha,\gamma}$.
$\ok{f^\sharp}$ is a correct (or sound) approximation of $f$ on $A$
when $\ok{\alpha \circ f \sqsubseteq f^\sharp\circ \alpha}$ holds. 
If $\ok{f^\sharp}$ is correct for $f$ then it is fix-point correct, that is, 
$\alpha(\lfp(f)) \leq_A  \lfp(\ok{f^\sharp})$ holds. 
The abstract function
$\ok{f^A \ud \alpha \circ f \circ \gamma: A\rightarrow A}$ 
is called the {\em best
correct approximation\/} (bca for short) of $f$ on $A$, because it turns out that any abstract
function
$\ok{f^\sharp}$ is a correct approximation of $f$ iff $\ok{f^A \sqsubseteq f^\sharp}$.
Hence,
$\ok{f^A}$ plays the role of the 
best possible correct approximation of $f$ on the abstract domain $A$ \cite{CC79}. The tuple $\tuple{f,\cAA,\ok{f^\sharp}}$ fully defines an abstract interpretation. It is 
(globally) complete when  
$\ok{\alpha \circ f = f^\sharp \circ \alpha}$
\cite{CC79,GRS00}. When $\ok{f^\sharp}$ is an abstract transfer function on $A$ used by some static program analysis, completeness intuitively encodes an optimal precision for $\ok{f^\sharp}$, meaning 
that the abstract behavior of $\ok{f^\sharp}$ on $A$ 
exactly matches the abstraction in $A$ of the concrete behavior of $f$.  
If $\ok{f^\sharp}$ is complete for $f$ then
least fix-point completeness holds (also called fix-point transfer), i.e.,
$\alpha(\lfp(f)) =  \lfp(\ok{f^\sharp})$ holds. 
It turns out  that completeness $\ok{\alpha \circ f = f^\sharp \circ \alpha}$  holds  
iff $\ok{\alpha\circ f = (\alpha\circ f\circ\gamma)\circ \alpha = f^A \circ \alpha}$ holds.  
This corresponds precisely to ask that the following equation holds for all $x\in C$:
\begin{equation}\label{basic-completeness}
\auco{}{\cAA}(f(\auco{}{\cAA}(x)))=\auco{}{\cAA}(f(x)).
\end{equation}
When (\ref{basic-completeness}) holds only for a subset $S$ of the input domain, i.e., for all $x\in S\subseteq C$, then $\tuple{f,\cAA,\ok{f^\sharp}}$ is {\em locally complete} w.r.t.\ $S$ \cite{BruniGGR21}. Of course local completeness is in general a weaker property than completeness. 
Let us recall that function composition preserves completeness, that is, if $f$ and $g$ are complete on $A$ then $f\circ g$ is complete on $A$.
It has been recently proved that local completeness plays a key role in guaranteeing that the composition of bca's is still a bca \cite{GiacobazziR25}, i.e., 
$f_1^{\cAA}\comp f_2^{\cAA} = (f_1\comp f_2)^{\cAA}$.
In both cases, the possibility of
defining a complete approximation $\ok{f^\sharp}$ of $f$ on some $A\in \Abs(C)$ 
only depends upon the concrete function $f$ and on the abstraction $A$, that is, 
$\ok{f^A}$ is the only possible option as complete approximation of $f$. Therefore 
if $\ok{f^\sharp}$ is complete this implies that $\ok{f^\sharp=f^\cAA}$ while 
if $\ok{f^\cAA}$ is not complete then no abstract interpretation $\ok{f^\sharp}$ exists which can be complete.
%
%

\section{Abstract interpretation in standard recursion theory}\label{AbsIntRec}

Standard recursion theory is a theory of sets (properties) and functions indexed on elements of a programming system, i.e., programs. In the following we consider abstract interpretation in the context of partial computable functions. In this case $\cAA\in\Abs(\wpre(\Domain))$, meaning that we are only interested in abstractions of properties of programs, i.e., of properties $W_e\in \wpre(\Domain)$ for some $e\in\ccL$. 
When abstract interpretation is applied to program analysis or automated program verification, 
the abstraction is intended to associate with each program under inspection a decidable approximation of its often undecidable semantics. 
This is because the analysis of programs requires decidable answers to undecidable questions such as
those expressed as extensional properties of programs and concerning their dynamic behavior. 

The notion of decidable abstract interpretation has been studied in \cite{CGR18} for the comparison of the difficulty in analyzing and verifying programs. The notion of decidable abstraction in \cite{CGR18} specifies an abstract domain as a collection of recursive sets (properties) such that the relation of sub-set inclusion is decidable. We follow \cite{CGR18} and define $\cAA$ {\em decidable\/} when for all $e\in\ccL$, $\auco{}{\cAA}(W_e)$ is decidable. Note that among the trivial abstractions only $\cAA=\{\Domain\}$, i.e., $\auco{}{\cAA}=\lambda x.\Domain$, is decidable. $\auco{}{\cAA} = \lambda x.x$ is indeed clearly not decidable.
It is also worth noting that decidability does not imply that $\{\auco{}{\cAA}(W_e)\}_{e\in\ccL}$ is uniformly decidable. This would require that we can decide whether $x\in\auco{}{\cAA}(W_e)$ uniformly on both $x\in\Domain$ and $e\in\ccL$.
Next result proves that the set $\sset{(x,i)}{x\in\auco{}{\cAA}(W_i)}$ is decidable if and only if $\auco{}{\cAA}=\lambda x.\Domain$, i.e., $\cAA$ is the only trivial decidable abstraction.

\begin{theorem}\label{non-uniformrec}
    $\{\auco{}{\cAA}(W_e)\}_{e\in\ccL}$ is uniformly decidable if and only if $\auco{}{\cAA}=\lambda x.\Domain$.
\end{theorem}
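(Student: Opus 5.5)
The \emph{if} direction is immediate. If $\auco{}{\cAA}=\lambda x.\Domain$, then $\sset{(x,e)}{x\in\auco{}{\cAA}(W_e)}=\Domain\times\ccL$, which is trivially decidable. Hence $\{\auco{}{\cAA}(W_e)\}_{e\in\ccL}$ is uniformly decidable.

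For the \emph{only if} direction I argue by contraposition and reduce the halting set $K$ to the uniform membership problem, which contradicts the fact that $K$ is creative and hence not decidable. Assume $\auco{}{\cAA}\neq\lambda x.\Domain$. Then there is some $W_{e_0}$ with $\auco{}{\cAA}(W_{e_0})\neq\Domain$. The closure $\auco{}{\cAA}$ is monotone and $\varnothing=W_{e_\varnothing}$ for some program $e_\varnothing$, so $\auco{}{\cAA}(\varnothing)\subseteq\auco{}{\cAA}(W_{e_0})\subsetneq\Domain$. I fix a point $x_0\in\Domain\smallsetminus\auco{}{\cAA}(\varnothing)$. On the other side, $\auco{}{\cAA}$ is increasing, so $\Domain\subseteq\auco{}{\cAA}(\Domain)$ and therefore $x_0\in\auco{}{\cAA}(\Domain)$. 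The point $x_0$ thus separates the abstractions of the two extreme \re\ sets $\varnothing$ and $\Domain$. This separating point is the only place where the hypothesis on $\auco{}{\cAA}$ is used, and it is the step where care is needed: monotonicity and extensivity of the upper closure operator are exactly what make a single fixed $x_0$ work uniformly.

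Next I use the s-m-n theorem, available since $\ccL$ is acceptable. It gives a total computable $g:\nat\ra\ccL$ with $\pfunction{g(n)}{}{}(y)\simeq$ ``run $\pfunction{n}{}{}(n)$; if it halts, return $0$''. So $W_{g(n)}=\Domain$ if $n\in K$, and $W_{g(n)}=\varnothing$ otherwise. By the choice of $x_0$,
\[ n\in K ~\Lra~ x_0\in\auco{}{\cAA}(W_{g(n)}) ~\Lra~ (x_0,g(n))\in\sset{(x,e)}{x\in\auco{}{\cAA}(W_e)}. \]
The map $n\mapsto(x_0,g(n))$ is total computable, so this is a many-to-one reduction of $K$ to the uniform membership set. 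If that set were decidable, $K$ would be decidable, contradicting the creativity of $K$ recalled in Section~\ref{comp-sec}. Hence uniform decidability forces $\auco{}{\cAA}=\lambda x.\Domain$.

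The remaining steps are routine: checking the s-m-n construction, and noting that $\varnothing$ and $\Domain$ belong to $\wpre(\Domain)$, so that $\auco{}{\cAA}$ is defined on them.
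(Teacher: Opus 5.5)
Your proof is correct and is essentially the paper's argument. Both proofs use monotonicity to find a point outside $\auco{}{\cAA}(\varnothing)$, and extensivity to see that this point lies in the abstraction of a larger set. The slice of the uniform membership set at that point is then a nontrivial extensional set of programs, hence undecidable. The only differences are in presentation. The paper cites Rice's theorem for $S_{\dot{x}}$ with witnesses $W_*=\varnothing$ and $W_p=\{\dot{x}\}$. You instead inline Rice's proof as an explicit many-one reduction of $K$ using $\varnothing$ versus $\Domain$, and you also write out the trivial converse.
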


\begin{proof}
    Assume $\{\auco{}{\cAA}(W_e)\}_{e\in\ccL}$ is uniformly decidable and $\auco{}{\cAA}\neq\lambda x.\Domain$. This means that $\sset{(x,i)}{x\in\auco{}{\cAA}(W_i)}$ is decidable and there exists $i\in\ccL$ and $\dot{x}\in\Domain$ such that $\dot{x}\not\in \auco{}{\cAA}(W_i)$. Consider the set $S_{\dot{x}}\ud\sset{i}{\dot{x}\not\in \auco{}{\cAA}(W_i)}\subseteq\ccL$. It is clear that $S_{\dot{x}}$ is an extensional property because if $e\in S_{\dot{x}}$ then $\dot{x}\not\in \auco{}{\cAA}(W_e)$ and if $\pfunction{e}{}{}\cong\pfunction{j}{}{}$ then $W_e=W_j$, therefore $\dot{x}\not\in \auco{}{\cAA}(W_j)$, i.e., $j\in S_{\dot{x}}$. By Rice's theorem $S_{\dot{x}}$ is decidable if and only if $S_{\dot{x}}=\varnothing$ or $S_{\dot{x}}=\ccL$, but none of these hold because: (1) 
    the program $*$ in Fig.~\ref{simple-code} is such that $W_*=\varnothing$ therefore because $\varnothing\subseteq W_i$ then by monotonicity $\auco{}{\cAA}(W_*)\subseteq \auco{}{\cAA}(W_i)$ which implies that
    $\dot{x}\not\in \auco{}{\cAA}(W_*)$ and therefore $*\in S_{\dot{x}}$, and (2) the program $p$ in Fig.~\ref{simple-code} is such that $W_p=\{\dot{x}\}$ which implies that $\dot{x}\in \auco{}{\cAA}(W_p)$, hence $p\not\in S_{\dot{x}}$. \myqed
    \begin{figure}[t]
    \begin{minipage}{6cm}
    \begin{center}
    \begin{align*}
    &1:~{\tt input}(x);\\
    &2:~{\tt while}\ {\tt true}\ \{ {\tt skip}\}.\\
    \end{align*}
    \end{center}
    \end{minipage}
    \begin{minipage}{6cm}
    \begin{center}
    \begin{align*}
    &1:~{\tt input}(x);\\
    &2:~{\tt if}\ x =\dot{x}\ {\tt then}\ {\tt skip}\\
    &3:~\qquad{\tt else}\ {\tt while}\ {\tt true}\ \{ {\tt skip}\}.
    \end{align*}
    \end{center}
    \end{minipage}
        \caption{The computable functions $\pfunction{*}{}{}$ and \pfunction{p}{}{}
        such that $W_*=\varnothing$ and $W_p=\{\dot{x}\}.$}\label{simple-code}
    \end{figure}
\end{proof}

We introduce the notion of a uniform closure operator as the effective counterpart of an upper closure operator: a transformation on program indices that realizes, at the level of programs, the closure induced by an abstract domain on the underlying semantic objects.

\begin{definition}
Let $r$ be a total computable function. $r$ is a uniform closure operator if for every index~$i\in\ccL$:
\begin{enumerate}
  \item $W_i \subseteq W_j ~\Rightarrow~W_{r(i)}\subseteq W_{r(j)}$,
  \item $W_i \subseteq W_{r(i)}$,
  \item $W_{r(i)}= W_{r(r(i))}$
\end{enumerate}
$r$ is decidable if for every index~$i\in\ccL$ we have that $W_{r(i)}$ is decidable.
\end{definition}

The following proposition proves that continuous and decidable abstract domains induce decidable uniform closure operators on $\ccL$.

\begin{proposition}
    Let $\cAA$ be decidable and continuous. Then $\{\auco{}{\cAA}(W_i)\}_{i\in\ccL}$ is a uniformly {\re} sequence and there exists a decidable uniform closure operator
    $r$ such that $\auco{}{\cAA}(W_i)=W_{r(i)}$ for all $i\in\ccL$.
\end{proposition}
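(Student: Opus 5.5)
The plan is to build $r$ directly from continuity, writing $W_i$ as the union of its finite stages. Since $\{\dovetl(i,t)\}_{t\in\nat}$ is an increasing $\omega$-chain of finite sets with union $W_i$, continuity of $\auco{}{\cAA}$ gives
\[\auco{}{\cAA}(W_i)=\bigcup_{t\in\nat}\auco{}{\cAA}(\dovetl(i,t)).\]
Each $\dovetl(i,t)$ is a finite set, so it equals $W_{f(i,t)}$ for an index $f(i,t)$ computable from $(i,t)$, namely a program that halts exactly on the listed elements. By decidability of $\cAA$, each $\auco{}{\cAA}(W_{f(i,t)})$ is a decidable set.

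The key step is to obtain these decisions uniformly in the finite set. Theorem~\ref{non-uniformrec} rules out uniformity over all \re\ sets, so the plan restricts attention to finite sets and assumes that the map $F\mapsto\auco{}{\cAA}(F)$ on finite sets is uniformly decidable. More precisely, the assumption is that the relation $\sset{(x,F)}{x\in\auco{}{\cAA}(F),\ F\ \mbox{finite}}$ is decidable, or at least \re. This is the main obstacle: it does not follow formally from the bare definition of a decidable $\cAA$. To handle it I would try the following, in order.
\begin{enumerate}
\item Read ``decidable abstraction'' in the sense of \cite{CGR18}, which gives effective access to the closure on finite (canonical) presentations.
\item Otherwise, state explicitly that we use an effective presentation of the closure on finite sets.
\end{enumerate}
Given this, $x\in\auco{}{\cAA}(W_i)$ holds iff $\exists t.\; x\in\auco{}{\cAA}(\dovetl(i,t))$. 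The relation $\sset{(x,i)}{x\in\auco{}{\cAA}(W_i)}$ is therefore \re. By the s-m-n theorem there is a total computable $r$ with $W_{r(i)}=\auco{}{\cAA}(W_i)$, and in particular $\{\auco{}{\cAA}(W_i)\}_{i\in\ccL}$ is uniformly \re.

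It remains to check the uniform closure axioms, which transfer directly from $\auco{}{\cAA}$ being an upper closure operator.
\begin{enumerate}
\item Monotonicity: $W_i\subseteq W_j$ implies $W_{r(i)}=\auco{}{\cAA}(W_i)\subseteq\auco{}{\cAA}(W_j)=W_{r(j)}$.
\item Extensivity: $W_i\subseteq\auco{}{\cAA}(W_i)=W_{r(i)}$.
\item Idempotence: $W_{r(r(i))}=\auco{}{\cAA}(W_{r(i)})=\auco{}{\cAA}(\auco{}{\cAA}(W_i))=\auco{}{\cAA}(W_i)=W_{r(i)}$.
\end{enumerate}
Finally, $r$ is decidable because each $W_{r(i)}=\auco{}{\cAA}(W_i)$ is decidable by the decidability hypothesis on $\cAA$. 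This decidability is only pointwise in $i$, as Theorem~\ref{non-uniformrec} forces unless $\auco{}{\cAA}=\lambda x.\Domain$.
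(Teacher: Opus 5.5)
Your argument is essentially the paper's. Both use continuity to write $\auco{}{\cAA}(W_i)=\bigcup_{t\in\nat}\auco{}{\cAA}(\dovetl(i,t))$, build $r$ by the s-m-n theorem from a semi-decision procedure for $\exists t.\,x\in\auco{}{\cAA}(\dovetl(i,t))$, transfer monotonicity, extensivity and idempotence from $\auco{}{\cAA}$, and get decidability of each $W_{r(i)}$ pointwise.

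The uniform effectiveness of $\auco{}{\cAA}$ on finite sets that you isolate is exactly what the paper assumes tacitly when it declares $\pfunction{p}{}{}$ partial computable, and you are right that it does not follow from "decidable and continuous". Consider the closure whose closed sets are $\varnothing$, $\Domain$ and the singletons $\{n\}$ with $n\in K$: it is continuous and decidable, yet $n+1\in\auco{}{\cAA}(\{n\})\Leftrightarrow n\notin K$, so if $\{\auco{}{\cAA}(W_i)\}_{i\in\ccL}$ were uniformly computably enumerable, then $\compl{K}$ would be computably enumerable too.
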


\begin{proof}
    By continuity of $\auco{}{\cAA}$ we have that 
    \[
    \begin{array}{lll}
    \exists t.\; x\in\auco{}{\cAA}(\dovetl(i,t)) &\Leftrightarrow& x\in\cup_{t\in\omega}\auco{}{\cAA}(\dovetl(i,t))\\[1ex]
    &\Leftrightarrow& x\in\auco{}{\cAA}(\cup_{t\in\omega}\dovetl(i,t))\\[1ex]
    &\Leftrightarrow& x\in\auco{}{\cAA}(W_i).
    \end{array}
    \]
    Therefore we can define the following partial computable function:
    \[
    \pfunction{p}{}{}(i,x)=\left\{
    \begin{array}{cl}
       x &\mbox{if $\exists t.\, x\in\auco{}{\cAA}(\dovetl(i,t))$}\\[1ex]
       \uparrow & \mbox{otherwise}
    \end{array}\right .
    \]
    By the s-m-n theorem there exists a total computable function $g$ such that $\ok{\pfunction{g(p,i)}{}{}(x)=\pfunction{p}{}{}(i,x)}$. We set $\ok{r\ud\lambda i.\; g(p,i)}$. It is clear by continuity of $\auco{}{\cAA}$ that for all $i\in\ccL$: $W_{r(i)}=\sset{x}{\pfunction{p}{}{}(i,x){\downarrow}}=\sset{x}{\exists t.\; x\in\auco{}{\cAA}(\dovetl(i,t))}=\auco{}{\cAA}(W_i)$. Therefore by $\auco{}{\cAA}$ monotonicity, idempotency and extensivity we have $W_i\subseteq W_j~\Ra~\auco{}{\cAA}(W_i)\subseteq \auco{}{\cAA}(W_j)~\Leftrightarrow~W_{r(i)}\subseteq W_{r(j)}$, $W_i\subseteq \auco{}{\cAA}(W_i)=W_{r(i)}$ and 
    $W_{r(r(i))}=\auco{}{\cAA}(W_{r(i)})=\auco{}{\cAA}(\auco{}{\cAA}(W_{i}))=\auco{}{\cAA}(W_{i})=W_{r(i)}$.
    Decidability follows because all $W_{r(i)}$ are decidable. \myqed
\end{proof}

Uniformity plays a key role in viewing abstract interpretation within computability theory  because it expresses the fact that the abstraction function, in our case expressed by the closure operator $\auco{}{A}$, is itself defined by an algorithm (or a finite set of algorithms) that act uniformly and continuously over any input {\re} property. This is the case when $\auco{}{A}$ does not only encode the closure operation induced by a Galois insertion $\tuple{\alpha,\gamma}$, i.e., when $\auco{}{A}=\gamma\alpha$, but also the algorithm that computes the abstraction, i.e., that transforms a {\re} property $W_i$ into its abstraction $\auco{}{A}(W_i)$. In this sense, the hypothesis of continuity precisely corresponds to the effective computability of the abstraction operation. As observed above in Theorem~\ref{non-uniformrec}, if $r_{\cAA}$ is a decidable uniform closure operator associated with a decidable abstract domain $\cAA$, the uniform {\re} sequence $\{W_{r_{\cAA}(i)}\}_{i\in\ccL}$ cannot be uniformly decidable unless $r_{\cAA}$ maps any program $i\in\ccL$ into an always terminating program, i.e., $W_{r_{\cAA}(i)}=\Domain$, e.g., $r_{\cAA}(i)={\tt skip}$.

\smallskip

\begin{example}
    Consider the non trivial interval abstract domain $\mathsf{Int}$ \cite{CC77}, whose elements represent any property $S\in\wpre(\integer)$ of the integer values that a program variable $x$ may assume during the computation by the least interval $\mathsf{Int}(S)=[a,b]$ such that $S\subseteq [a,b]$, where $a\leq b$, $a\in\integer\cup\{-\infty\}$ and $b\in\integer\cup\{+\infty\}$, meaning that $x\in[a,b]$. This forms a well known decidable abstract domain enjoying a GI. We can associate with the corresponding decidable closure operator $\auco{}{\mathsf{Int}}$ a uniform program transformation ${\tt int}:\ccL\ra\ccL$ as follows:
    \[
    \pfunction{p}{}{}(i,x)=\left\{
    \begin{array}{ll}
      x & \mbox{if $\exists z,y\in W_i.\, z\leq x\leq y$}\\
      \uparrow & \mbox{otherwise}
    \end{array}
    \right .
    \]
    where by s-m-n theorem $\pfunction{{\tt int}(p,i)}{}{}(x)=\pfunction{p}{}{}(i,x)$. It is immediate to prove that for all $W_i\in\wpre(\integer)$, $\auco{}{\mathsf{Int}}(W_i)=W_{{\tt int}(p,i)}$, hence ${\tt int}$ is a uniform closure operator.
\end{example}

It is important to observe that the construction of the set $W_{{\tt int}(p,i)}$ corresponds precisely to dynamically checking the computed bounds of the program $i$. This is indeed achieved by sand-boxing, by the program transformation ${\tt int}$, the code of $i$ within $p$, as common practice in dynamic program analysis. This is the case of Daikon, a well-known tool for dynamic detection of approximate program invariants. Daikon infers likely invariants from executions, and these invariants can then be used for documentation, testing, or static checking \cite{Ernst-Daikon}.

It is obvious by Rice's theorem that the above uniform closure ${\tt int}$ is decidable but not uniformly decidable, i.e., the set $\sset{(x,i)}{x\in W_{{\tt int}(p,i)}}$ is in general not decidable (because for $x\in\Domain$ the set $\sset{i}{x\in W_{{\tt int}(p,i)}}$ is extensional). It is worth noting that if we replace $\auco{}{\cAA}(W_i)$ with any upper approximations of $W_i$ as induced by a decidable abstract interpretation defined on $\cAA$ (e.g., by employing widening operations on intervals \cite{CC77}) we cannot have as result a uniform closure operator. Denote the decidable abstract semantics defined by a terminating abstract interpretation on an abstract domain $\cAA$ as $\ok{\grasse{\cdot}^\cAA:\ccL\times\cAA\ra\cAA}$. It is clear that the set $\ok{\sset{(x,i)}{x\in\gamma(\grasse{i}^\cAA(\top_A))}}$ is now decidable, but any total computable function $s$ obtainable by s-m-n theorem and mapping $i\in\ccL$ into the index $s(i)$ such that $\ok{W_{s(i)}=\gamma(\grasse{i}^\cAA(\top_A))}$ cannot be in general a uniform closure. This is because by monotonicity this would imply that the abstract semantics $\ok{\grasse{i}^\cAA}$ induces an extensional equivalence relation on $\ccL$ and we know that this holds for trivial abstractions only \cite{BruniGGGP20}.
Consequently, a uniform closure operator can only arise from an upper closure operator $\auco{}{\cAA}$, namely uniformity is an intrinsic property of the abstract domain, not of any decidable abstract semantics that can be constructed over it. 

\smallskip

Because of these observations, throughout the following discussion, the decidable set $\auco{}{\cAA}(W_i)$, that corresponds to an element of the abstract domain $\cAA$, is always considered relative to a fixed and known property $W_i$. $W_i$ can be a decidable specification for some pre/post condition property or an intermediate property of the computation. The index $i$ will therefore only denote any known program representing that property.

\section{Classes of completeness and incompleteness}

In this section, we characterize the computational properties of the sets of programs
for which, given fixed abstract domain $\cAA$, the corresponding bca is locally complete. This naturally recalls the stronger notion of (global) completeness class firstly introduced in \cite{GiacobazziLR15} and corresponding to the class of all programs for which a given abstract domain $\cAA$ is globally complete: 
\[
\bbbc(\cAA) \defi \sset{e\in\ccL}{\forall i\in\ccL.\, 
\auco{}{A}(\pfunction{e}{}{}(W_i)) = \auco{}{A}(\pfunction{e}{}{}(\auco{}{A}(W_i)))}.
\]
It is worth noting that $\bbbc(\cAA)$, and hence also $\compl{\bbbc(\cAA)}$, are extensional properties of programs in $\ccL$, namely:
$e\in \bbbc(\cAA)~\wedge~\pfunction{e}{}{}\cong\pfunction{i}{}{}~\Rightarrow~i\in\bbbc(\cAA)$.
In \cite{GiacobazziLR15} Giacobazzi et al., proved that for a programming system $\ccL$ defined in terms of a simple imperative programming language {\em \`a la} $\Imp$ ::
\begin{itemize}
    \item $|\bbbc(\cAA)| = \aleph_0$;
    \item $\bbbc(\cAA)$ is decidable if and only if $A$ is trivial\footnote{Clearly when $A$ is trivial $\bbbc(\cAA)=\ccL$ and $\compl{\bbbc(\cAA)}=\varnothing$.};
    \item $A$ not trivial $\Rightarrow$ $\bbbc(\cAA)$ and $\compl{\bbbc(\cAA)}$ are productive sets, hence non {\re}.
\end{itemize}
By the fundamental isomorphism theorem \cite{Rogers58}, between any two acceptable programming systems there is an effective, one-to-one, and onto translation, hence because $\Imp$ is an acceptable programming system, any acceptable programming system satisfies the above three properties. In particular if $*\in\ccL$ and ${\tt id}\in\ccL$ denote respectively a program such that for all $x\in\Domain:\,\pfunction{*}{}{}(x){\uparrow}$ and $\pfunction{*}{}{}(x)=x$ (e.g., see Fig.~\ref{simple-code}), then $*,{\tt id}\in \bbbc(\cAA)$. 

Things are more complicated when we move from global completeness to the weaker notion of local completeness. 
We introduce the notion of local completeness class as the set of all programs for which its bca is locally complete with respect a fixed input {\re} property $W_i$.

\begin{definition}
    Let $\cAA$ be an abstract domain and $i\in \ccL$. The local completeness class of $\cAA$ with respect to $i$ is
    \[
\bbbc(\cAA,i) \defi \sset{e\in\ccL}{
\auco{}{A}(\pfunction{e}{}{}(W_i)) = \auco{}{A}(\pfunction{e}{}{}(\auco{}{A}(W_i)))}.
\]
\end{definition}
The local completeness class above naturally extends to sets of inputs as follows: let $S\subseteq\ccL$ then the local completeness class of $\cAA$ with respect to $S$ is
\[
\bbbc(\cAA,S)=\bigcap_{i\in S}\bbbc(\cAA,i).
\]
It is immediate to observe that if $S\subseteq R~\Rightarrow~\bbbc(\cAA,R)\subseteq\bbbc(\cAA,S)$, and for any $S\subseteq\ccL$: $\bbbc(\cAA)\subseteq\bbbc(\cAA,S)$, and $\bbbc(\cAA)=\bbbc(\cAA,\ccL)$.

\begin{proposition}\label{basic-prop}
Let $\cAA\in\Abs(\wpre(\ccD))$ be an abstract domain and
$S\subseteq\ccL$.
\begin{enumerate}[(i)]
\item 
$\bbbc(\cAA,S)$ is an extensional property;
\item
for all $S\subseteq\ccL$: $|\bbbc(\cAA,S)|=\aleph_0$.
\end{enumerate}
\end{proposition}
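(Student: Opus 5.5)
For (i) I will argue straight from the definition. Fix $i\in S$ and suppose $e\in\bbbc(\cAA,i)$ and $\pfunction{e}{}{}\cong\pfunction{j}{}{}$. Extensional equivalence gives $\pfunction{e}{}{}(X)=\pfunction{j}{}{}(X)$ for every $X\subseteq\Domain$, since the image only uses points of $\dom(\pfunction{e}{}{})=\dom(\pfunction{j}{}{})$, where the two functions agree. Instantiating $X=W_i$ and $X=\auco{}{A}(W_i)$ turns the defining equation for $e$ into the defining equation for $j$, so $j\in\bbbc(\cAA,i)$. Intersections of extensional properties are extensional, so $\bbbc(\cAA,S)=\bigcap_{i\in S}\bbbc(\cAA,i)$ is extensional. (If $S=\varnothing$ the class is $\ccL$, which is trivially extensional.)

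For (ii) the upper bound is immediate: $\bbbc(\cAA,S)\subseteq\ccL$ and $\ccL$ is denumerable. For the lower bound I will use the remark that $\bbbc(\cAA)\subseteq\bbbc(\cAA,S)$ for every $S$. It then suffices to exhibit infinitely many globally complete programs. The quickest route is to cite the result of \cite{GiacobazziLR15} recalled above, $|\bbbc(\cAA)|=\aleph_0$, which transfers to any acceptable $\ccL$ by Rogers' isomorphism.

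To keep the argument self-contained I would also check directly that ${\tt id}\in\bbbc(\cAA)$. Both sides of the defining equation reduce to $\auco{}{A}(W_i)=\auco{}{A}(\auco{}{A}(W_i))$, which holds by idempotency. The same is true of $*$, where both sides equal $\auco{}{A}(\varnothing)$. Then, by the padding lemma, ${\tt id}$ has infinitely many distinct indices, for instance ${\tt skip};\ldots;{\tt skip};{\tt id}$ in $\Imp$. By part (i) every one of these lies in $\bbbc(\cAA,S)$, which gives the lower bound $\aleph_0$.

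The only delicate point is the convention on images, namely that $\pfunction{e}{}{}(X)$ ranges over $X\cap\dom(\pfunction{e}{}{})$. I must make sure extensional equivalence really yields equality of images. It does, because $\cong$ requires equal domains, so there is no real obstacle.
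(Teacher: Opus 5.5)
Your proof is correct and follows essentially the same route as the paper. For (i) you spell out the image-equality detail that the paper calls obvious, and for (ii) you argue, as the paper does, from ${\tt id}\in\bbbc(\cAA)\subseteq\bbbc(\cAA,S)$ together with padding and part (i), adding the upper bound from the countability of $\ccL$, which the paper leaves implicit.
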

\begin{proof}
Extensionality in {\em (i)} is obvious.
{\em (ii)} $|\bbbc(\cAA,i)|=\aleph_0$ follows by extensionality in {\em (i)} and 
a straightforward padding argument because $\bbbc(\cAA)\neq\varnothing$ and hence, for any $S\subseteq\ccL$ also $\bbbc(\cAA,S)\neq\varnothing$. Indeed for any  
$W_i\in\wpre(\Domain)$: ${\tt id}\in\bbbc(\cAA)$ and ${\tt id}\in\bbbc(\cAA,i)$, where $\pfunction{{\tt id}}{}{}=\lambda x.\; x$. \myqed
\end{proof}

The proof of the following theorem follows the structure of the proof of a similar result for the global completeness case in \cite{GiacobazziLR15} and immediately extends that result to the case of local completeness.

\begin{theorem}\label{Decidable-C}
    Let $\cAA$ be an abstract domain and $i\in\ccL$ such that $W_i\neq\auco{}{A}(W_i)$.
    Then $\bbbc(\cAA,i)$ is decidable if and only if $\cAA=\top$.
\end{theorem}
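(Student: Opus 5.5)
The plan is to treat the two directions separately. The reverse direction is immediate. If $\cAA=\top$ then $\auco{}{\cAA}=\lambda x.\Domain$, so both sides of the defining equation of $\bbbc(\cAA,i)$ equal $\Domain$ for every $e$. Hence $\bbbc(\cAA,i)=\ccL$, which is decidable.

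For the forward direction we argue contrapositively. By Proposition~\ref{basic-prop}(i), $\bbbc(\cAA,i)$ is an extensional property, so by Rice's theorem it is decidable only if it is $\varnothing$ or $\ccL$. By Proposition~\ref{basic-prop}(ii) (in fact ${\tt id}\in\bbbc(\cAA,i)$) it is nonempty. So it suffices to show that when $\cAA\neq\top$ and $W_i\neq\auco{}{\cAA}(W_i)$, some program $e\notin\bbbc(\cAA,i)$ exists, i.e.\ $\bbbc(\cAA,i)\neq\ccL$.

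To build such an $e$, fix $\dot y\in\auco{}{\cAA}(W_i)\smallsetminus W_i$, which exists by hypothesis and extensivity. Since $\cAA\neq\top$, there is some {\re} set $W_j$ and some $\dot z$ with $\dot z\notin\auco{}{\cAA}(W_j)$. By monotonicity, $\dot z\notin\auco{}{\cAA}(\varnothing)$.

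Now let $e$ be the program that halts with output $\dot z$ on input $\dot y$ and diverges on every other input, built as in Fig.~\ref{simple-code}. Then $\pfunction{e}{}{}(W_i)=\varnothing$ because $\dot y\notin W_i$, so the left-hand side is $\auco{}{\cAA}(\varnothing)$, which does not contain $\dot z$. On the other side, $\pfunction{e}{}{}(\auco{}{\cAA}(W_i))=\{\dot z\}$, so the right-hand side is $\auco{}{\cAA}(\{\dot z\})\ni\dot z$. The two sides differ, hence $e\notin\bbbc(\cAA,i)$.

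The only delicate point is the choice of witnesses. We need an element outside $\auco{}{\cAA}(\varnothing)$, which is where $\cAA\neq\top$ is used, and an element of $\auco{}{\cAA}(W_i)\smallsetminus W_i$, which is where $W_i\neq\auco{}{\cAA}(W_i)$ is used. Combining these in a single finite-domain program is what breaks completeness. Also note that the hypothesis rules out the identical abstraction, consistent with the theorem's statement.
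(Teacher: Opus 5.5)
Your proposal is correct and follows essentially the same route as the paper's proof. Like the paper, you use extensionality and Rice's theorem together with nonemptiness to reduce the question to exhibiting one program outside $\bbbc(\cAA,i)$. Your witness is also the paper's: a program sending a point of $\auco{}{\cAA}(W_i)\smallsetminus W_i$ to a point outside some $\auco{}{\cAA}(W_j)$ and diverging elsewhere. The only cosmetic difference is that you get $\dot z\notin\auco{}{\cAA}(\varnothing)$ directly by monotonicity, whereas the paper reaches the same conclusion $\auco{}{\cAA}(\varnothing)\neq\auco{}{\cAA}(\{b\})$ by contradiction.
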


\begin{proof}
    Being $\bbbc(\cAA,i)$ extensional, by Rice's theorem $\bbbc(\cAA,i)$ is decidable if and only $\bbbc(\cAA,i)=\varnothing$ or $\bbbc(\cAA,i)=\ccL$. Of course $\bbbc(\cAA,i)\neq\varnothing$, therefore it can only be $\bbbc(\cAA,i)=\ccL$. Moreover $W_i\neq\auco{}{A}(W_i)$ implies that $\cAA$ is trivial if and only if $\cAA=\top$. It is obvious that when $\cAA=\top$ then $\bbbc(\cAA,i)=\ccL$. Assume $\cAA\neq\top$ and $\bbbc(\cAA,i)=\ccL$. Then there exists $W_o\in\wpre(\Domain)$ such that $\auco{}{A}(W_o)\subsetneq\Domain$. Let $b\in\Domain\setminus\auco{}{A}(W_o)$ and $a\in \auco{}{A}(W_i)\setminus W_i$. We can define a partial computable function $\pfunction{e}{}{}$ computable by a program $e\in\ccL$ as follows:
    $$\pfunction{e}{}{}(x)\ud \left\{
    \begin{array}{ll}
       b & \mbox{if $x=a$}\\
       \uparrow &\mbox{otherwise}
    \end{array}
    \right .
    $$
    In this case $\pfunction{e}{}{}(W_i)=\varnothing$ and 
    $\pfunction{e}{}{}(\auco{}{A}(W_i))=\{b\}$. Note that 
    $\auco{}{A}(\varnothing)\neq \auco{}{A}(\{b\})$. This because 
    otherwise by $\auco{}{A}$ monotonicity: $\{b\}\subseteq \auco{}{A}(\{b\})=\auco{}{A}(\varnothing)\subseteq \auco{}{A}(W_o)$ would imply that 
    $b\in \auco{}{A}(W_o)$, which is a contradiction. Therefore $e\not\in\ccL$. \myqed
\end{proof}

An immediate consequence of this result is that, similar to the case of global completeness, for any input $W_i$ that is not precisely represented in the abstract domain, i.e., such that $W_i\neq\auco{}{A}(W_i)$, $\bbbc(\cAA,i)$ and its complement $\compl{\bbbc(\cAA,i)}$ are decidable if and only if $\cAA=\top$. Moreover, because $\bbbc(\cAA,i)$ is extensional, 
there always exist infinitely many 
programs for which the given abstraction is locally incomplete.  
In view of this observation, because an abstract interpretation is (locally) complete if and only if the corresponding bca is complete (see \cite{GRS00}[Lemma 3.1]), any abstract interpretation built on an abstraction $\cAA$ is locally complete for all programs if and only if $\cAA=\top$.

\medskip 

Theorem~\ref{Decidable-C} implies that when $W_i\neq\auco{}{A}(W_i)$ and $\cAA\neq\top$ (which implies that $\cAA$ is not trivial) both $\bbbc(\cAA,i)$ and $\compl{\bbbc(\cAA,i)}$ cannot be acceptable programming systems. On one side its is obvious that $*,{\tt id}\in \bbbc(\cAA,i)$. On the other side Theorem~\ref{Decidable-C} proved that any program 
$e\in\ccL$ such that
\begin{equation}\label{incomplete-prog}
    \pfunction{e}{}{}(x)\ud \left\{
    \begin{array}{ll}
       b & \mbox{if $x=a$}\\
       \uparrow &\mbox{otherwise}
    \end{array}
    \right .
\end{equation}
with $a\in \auco{}{A}(W_i)\setminus W_i$ and $b\in\Domain\setminus\auco{}{A}(W_o)$, we have $e\in\compl{\bbbc(\cAA,i)}$. 
Consider the following total computable function:
\begin{equation}\label{selection}
\pfunction{s}{}{}(a,b,c,x)\ud \left\{
    \begin{array}{ll}
       b & \mbox{if $x=a$}\\
       c &\mbox{otherwise}
    \end{array}
    \right .
\end{equation}
When $a,b,c$ are fixed as above, by the s-m-n theorem we have that there exists a total computable function $g$ such that $\pfunction{g(s,a,b,c)}{}{}(x)\cong\pfunction{s}{}{}(a,b,c,x)$.
It is immediate to prove by the same argument above that 
if $c\in W_o$ (i.e., $W_o\neq\varnothing$) then 
$g(s,a,b,c)\in\compl{\bbbc(\cAA,i)}$. 
This implies that both $\bbbc(\cAA,i)$ and $\compl{\bbbc(\cAA,i)}$ do not satisfy the axioms of Wagner's Uniform Reflexive Structures (URS) \cite{Wagner69} that model the essential properties of acceptable g{\"o}delizations, hence of acceptable programming systems. In particular for $\compl{\bbbc(\cAA,i)}$ we have that $*\not\in \compl{\bbbc(\cAA,i)}$ hence $\compl{\bbbc(\cAA,i)}$ violates Axiom 1 of URS requiring that the always undefined function has to be represented in any URS. Moreover for $\bbbc(\cAA,i)$ we have that if $a\in \auco{}{A}(W_i)\setminus W_i$, $b\in\Domain\setminus\auco{}{A}(W_o)$ and $c\in W_o$ then $g(s,a,b,c)\not\in \bbbc(\cAA,i)$ hence $\bbbc(\cAA,i)$ violates Axiom 2 of URS requiring that the function $\lambda a,b,c.\,g(s,a,b,c)$ performing selection for all $a,b,c\neq *$ has to be represented in any URS. For exactly the same reasons also $\compl{\bbbc(\cAA)}$ and $\bbbc(\cAA)$ cannot be acceptable programming systems. This observation implies that no matter if global or local completeness is concerned, neither completeness nor incompleteness classes can be considered as acceptable programming systems (hence languages) for partial computable functions, this independently from the equational theory used to specify the function in \ref{selection}, unless for the cases of $\bbbc(\cAA,i)$ and $\bbbc(\cAA)$ only, $\cAA$ is  trivial, i.e., $\bbbc(\cAA,i)=\bbbc(\cAA)=\ccL$.

\subsection{The hardness of local completeness}
As in the case of global completeness, we can go beyond and prove that $\bbbc(\cAA,i)$ is indeed a bi-productive set under some hypothesis on $i\in\ccL$ and $\cAA$.

\begin{theorem}\label{productivity}
    Let $\cAA$ be an abstract domain and $i\in\ccL$ such that $\cAA\neq\top$, $W_i\neq\auco{}{A}(W_i)$ (i.e., $\cAA$ is not trivial), and $W_i\neq\varnothing$. Then $\bbbc(\cAA,i)$ is a bi-productive set.
\end{theorem}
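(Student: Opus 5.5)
Both halves will be proved by many-one reductions from $\compl{K}$, which is productive. Since productivity is inherited upward along $\preceq_f$, it suffices to find total computable $f,h$ with $\compl{K}\preceq_f \bbbc(\cAA,i)$ and $\compl{K}\preceq_h \compl{\bbbc(\cAA,i)}$. Equivalently, $K\preceq_f\compl{\bbbc(\cAA,i)}$ and $K\preceq_h\bbbc(\cAA,i)$. In each case the reduction uses the s-m-n theorem to build, from $n$, a program whose behaviour is switched on only once $\pfunction{n}{}{}(n){\downarrow}$ has been observed.

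We first fix the witnesses from the proof of Theorem~\ref{Decidable-C}. Since $\cAA\neq\top$, there is $W_o$ and $b\in\Domain\setminus\auco{}{A}(W_o)$, and by monotonicity $b\notin\auco{}{A}(\varnothing)$. Since $W_i\neq\auco{}{A}(W_i)$, there is $a\in\auco{}{A}(W_i)\setminus W_i$. Finally, $W_i\neq\varnothing$ gives some $c\in W_i$.

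For the first reduction, define
\[
\pfunction{f(n)}{}{}(x)\ud\left\{\begin{array}{ll} b & \mbox{if $x=a$ and $\pfunction{n}{}{}(n){\downarrow}$}\\ \uparrow & \mbox{otherwise.}\end{array}\right.
\]
If $n\notin K$, then $f(n)$ computes the empty function, which is $*$, and $*\in\bbbc(\cAA,i)$. If $n\in K$, then $f(n)$ is the incomplete program (\ref{incomplete-prog}): $\auco{}{A}(\varnothing)\neq\auco{}{A}(\{b\})$, so $f(n)\notin\bbbc(\cAA,i)$. Thus $\compl{K}\preceq_f\bbbc(\cAA,i)$, and $\bbbc(\cAA,i)$ is productive. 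This step needs only $\cAA\neq\top$ and $a$.

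For the second reduction we need the opposite polarity. The program must be incomplete by default and become complete once $n\in K$ is witnessed, and this is where $W_i\neq\varnothing$ enters. Define
\[
\pfunction{h(n)}{}{}(x)\ud\left\{\begin{array}{ll} b & \mbox{if $x=a$}\\ x & \mbox{if $x\neq a$ and $\pfunction{n}{}{}(n){\downarrow}$}\\ \uparrow & \mbox{otherwise.}\end{array}\right.
\]
If $n\notin K$, then $\pfunction{h(n)}{}{}(W_i)=\varnothing$, because $a\notin W_i$. But $\pfunction{h(n)}{}{}(\auco{}{A}(W_i))=\{b\}$, so $h(n)$ is incomplete, as before.

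If $n\in K$, then $\pfunction{h(n)}{}{}(W_i)=W_i$ and $\pfunction{h(n)}{}{}(\auco{}{A}(W_i))=(\auco{}{A}(W_i)\setminus\{a\})\cup\{b\}$. We then need $\auco{}{A}(W_i)=\auco{}{A}((\auco{}{A}(W_i)\setminus\{a\})\cup\{b\})$, and this is the step I expect to be the main obstacle: it requires $b\in\auco{}{A}(W_i)$, which need not hold for the chosen $b$.

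The first fix I would try is to rechoose $b$ when defining $h$, taking $b\ud a$. Then $\pfunction{h(n)}{}{}$ is the identity on $\auco{}{A}(W_i)$ when $n\in K$, so completeness holds trivially: $\auco{}{A}(W_i)=\auco{}{A}(\auco{}{A}(W_i))$. When $n\notin K$, the program has image $\varnothing$ on $W_i$ and $\{a\}$ on $\auco{}{A}(W_i)$. This is incomplete provided $a\notin\auco{}{A}(\varnothing)$.

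If instead $a\in\auco{}{A}(\varnothing)$, I would use $c$: map $x=a$ to $b$ and map the elements of $W_i$ to $c$. The point of this variant is that, once $n\in K$ is witnessed, $c$ lands in $W_i\subseteq\auco{}{A}(W_i)$, which keeps the value where completeness can be checked.

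In either variant, the remaining work is to verify that the incomplete branch is strict, namely that the image $\varnothing$ versus $\{b\}$ or $\{a\}$ separates under $\auco{}{A}$. After that, extensionality of $\bbbc(\cAA,i)$ (Proposition~\ref{basic-prop}) and the s-m-n theorem give totality of $h$, completing the reduction $K\preceq_h\bbbc(\cAA,i)$. Hence $\compl{\bbbc(\cAA,i)}$ is productive, and $\bbbc(\cAA,i)$ is bi-productive.
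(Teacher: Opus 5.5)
Your first reduction is the paper's (output $b$ at $a$ once $n\in K$ is observed, empty function otherwise), so productivity of $\bbbc(A,i)$ is established. The second half has a genuine gap. Fix~1 ($b:=a$, identity once $n\in K$) works only when some $a\in\rho_A(W_i)\setminus W_i$ lies outside $\rho_A(\varnothing)$. Fix~2, meant for the remaining case, brings back exactly the obstacle you identified. Since $a\mapsto b$ unconditionally, for $n\in K$ you get $b\in\varphi_{h(n)}(\rho_A(W_i))$ while $\varphi_{h(n)}(W_i)=\{c\}$. Completeness would then need $b\in\rho_A(\{c\})$, which nothing guarantees. You flagged the incomplete branch as the remaining work, but in Fix~2 it is the complete branch ($n\in K$) that fails.

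A concrete counterexample: on $\Domain=\omega$ take $\rho_A(X)=X\cup\{0\}$ and $W_i=\{1\}$. All hypotheses hold, $a=0$ is forced, and $0\in\rho_A(\varnothing)$. So Fix~1 produces a complete program even for $n\notin K$, since $\rho_A(\varnothing)=\{0\}=\rho_A(\{0\})$. With $b=5$ and $c=1$, Fix~2 gives $\rho_A(\{1\})=\{0,1\}\neq\{0,1,5\}=\rho_A(\{1,5\})$ for $n\in K$. In neither variant is $h$ a reduction of $K$ to $\bbbc(A,i)$.

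The missing idea is to let the switch also override the value at $a$, so the program becomes constant. Set $\varphi_{h(n)}(x)=b$ if $x=a$ or $n\in K$, and undefined otherwise; the disjunction is semi-decidable, so s-m-n applies. For $n\in K$ the program is constantly $b$. Since $W_i\neq\varnothing$, and hence $\rho_A(W_i)\neq\varnothing$, both images are $\{b\}$ and $h(n)\in\bbbc(A,i)$. For $n\notin K$ it is the archetypal incomplete program (\ref{incomplete-prog}), incomplete because $b\notin\rho_A(\varnothing)$. This is the paper's construction.

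You never needed $b\in\rho_A(W_i)$, only $b\in\rho_A(\varphi_{h(n)}(W_i))$. Forcing $\varphi_{h(n)}(W_i)=\{b\}$ is exactly where $W_i\neq\varnothing$ is used, and it removes any case split on whether $a\in\rho_A(\varnothing)$.
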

\begin{proof}
    To prove that $\bbbc(\cAA,i)$ is a bi-productive set we prove that $\compl{K}\preceq \bbbc(\cAA,i)$ and $\compl{K}\preceq \compl{\bbbc(\cAA,i)}$. As above, because $W_i\neq\auco{}{A}(W_i)$ and $\cAA\neq\top$ there exist $a\in \auco{}{A}(W_i)\setminus W_i$ and $W_o\in\wpre(\Domain)$ such that $b\in\Domain\setminus\auco{}{A}(W_o)$.
    
    For the case $\compl{K}\preceq \bbbc(\cAA,i)$ we can define a partial computable function $\pfunction{e}{}{}$ computable by a simple dovetail argument by a program $e\in\ccL$ as follows:
    $$\pfunction{e}{}{}(x,y)\ud \left\{
    \begin{array}{ll}
       b & \mbox{if  $x\in K~\wedge~y=a$}\\
       \uparrow &\mbox{otherwise}
    \end{array}
    \right .
    $$
    By the b-m-n theorem, there exists a total computable function $s:\ccL\rightarrow\ccL$ such that for any $x\in\ccL$ and $y\in\Domain$, 
    $\pfunction{e}{}{}(x,y)=\pfunction{s(e,x)}{}{}(y)$. We prove that $K\preceq_{s}\compl{\bbbc(\cAA,i)}$.
    \begin{itemize}
        \item 
        If $x\in K$ then $\pfunction{s(e,x)}{}{}(y)=b\Leftrightarrow y=a$. Because $\pfunction{s(e,x)}{}{}(W_i)=\varnothing$ then by monotonicity $\auco{}{A}(\varnothing)\subseteq\auco{}{A}(\{b\})$. Assume $\auco{}{A}(\varnothing)=\auco{}{A}(\{b\})$, then
        \[
        \{b\}\subseteq\auco{}{A}(\{b\})=\auco{}{A}(\varnothing)
        \subseteq \auco{}{A}(W_o)
        \]
        which implies $b\in \auco{}{A}(W_o)$ which is absurd. Therefore $s(e,x)\in \compl{\bbbc(\cAA,i)}$.\\
        \item 
        If $x\not\in K$ then for all $y\in\Domain$:
        $\pfunction{s(e,x)}{}{}(y){\uparrow}$, hence $s(e,x)\in \bbbc(\cAA,i)$.
    \end{itemize}

    For the case $\compl{K}\preceq \compl{\bbbc(\cAA,i)}$ we define, by a similar dovetail argument, a partial computable function $\pfunction{e}{}{}$ as follows:
    $$\pfunction{e}{}{}(x,y)\ud \left\{
    \begin{array}{ll}
       b & \mbox{if  $x\in K~\vee~y=a$}\\
       \uparrow &\mbox{otherwise}
    \end{array}
    \right .
    $$
    Again by the s-m-n theorem, there exists $s:\ccL\rightarrow\ccL$ a total computable function such that for any $x\in\ccL$ and $y\in\Domain$, 
    $\pfunction{e}{}{}(x,y)=\pfunction{s(e,x)}{}{}(y)$. We prove that $K\preceq_{s}\bbbc(\cAA,i)$. 
    \begin{itemize}
        \item 
        If $x\in K$ then 
        for all $y\in\Domain$:
        $\pfunction{s(e,x)}{}{}(y)=b$. Therefore, because $W_i\neq\varnothing$: $\pfunction{s(e,x)}{}{}(W_i)=\{b\}=\pfunction{s(e,x)}{}{}(\auco{}{A}(W_i))$ again implies that $s(e,x)\in \bbbc(\cAA,i)$.\\
        \item 
        If $x\not\in K$ then $\pfunction{s(e,x)}{}{}(y){\downarrow}\Leftrightarrow y=a$. Because $a\not\in W_i$ then $\pfunction{s(e,x)}{}{}(W_i)=\varnothing$ and $\ok{\pfunction{s(e,x)}{}{}(\auco{}{A}(W_i))=\{b\}}$ which implies that $s(e,x)\in \compl{\bbbc(\cAA,i)}$ by the same argument above.
    \end{itemize} \myqed
\end{proof}
It is worth noting that the productivity of $\bbbc(\cAA,i)$ does not require $W_i\neq\varnothing$, which is instead required in the proof of case $\ok{\compl{\bbbc(\cAA,i)}}$. This is because if $\auco{}{A}(\varnothing)\in\wprec(\Domain)$ (e.g., when $\cAA$ is decidable) and $W_i=\varnothing$ the set $\ok{\compl{\bbbc(\cAA,i)}}$ is indeed {\re}. 

\begin{proposition}
    Let $W_i=\varnothing$ and $\auco{}{A}(\varnothing)\neq\varnothing$. If $\auco{}{A}(\varnothing)\in\wprec(\Domain)$ then $\compl{\bbbc(\cAA,i)}$ is a creative set.
\end{proposition}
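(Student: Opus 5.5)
When $W_i=\varnothing$ the defining equation of the local completeness class becomes simpler, and I will exploit this. Since $\pfunction{e}{}{}(W_i)=\pfunction{e}{}{}(\varnothing)=\varnothing$, we get that $e\in\bbbc(\cAA,i)$ iff $\auco{}{A}(\varnothing)=\auco{}{A}(\pfunction{e}{}{}(\auco{}{A}(\varnothing)))$. The plan has two parts. First, show that $\compl{\bbbc(\cAA,i)}$ is {\re} by giving a $\Sigma^0_1$ description of it. Second, show that $K\preceq_f\compl{\bbbc(\cAA,i)}$. Creativity then follows from the characterization recalled in the preliminaries: a set is creative iff it is {\re} and complete for $\wpre(\Domain)$ under many-to-one reducibility.

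\textbf{Step 1 (explicit description).} Write $E\ud\auco{}{A}(\varnothing)$. By monotonicity, $E\subseteq\auco{}{A}(X)$ for every $X$, so only the inclusion $\auco{}{A}(\pfunction{e}{}{}(E))\subseteq E$ can fail. I claim that
\[
e\in\compl{\bbbc(\cAA,i)}~\Lra~\exists x\in E.\;\pfunction{e}{}{}(x){\downarrow}\,\wedge\,\pfunction{e}{}{}(x)\not\in E .
\]
For the direction from right to left: if $y=\pfunction{e}{}{}(x)\not\in E$, then $y\in\pfunction{e}{}{}(E)\subseteq\auco{}{A}(\pfunction{e}{}{}(E))$ by extensivity, so the two sides of the equation differ. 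For the converse: if $\pfunction{e}{}{}(E)\subseteq E$, then by monotonicity and idempotency $\auco{}{A}(\pfunction{e}{}{}(E))\subseteq\auco{}{A}(E)=E$, hence equality holds.

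\textbf{Step 2 (computable enumerability).} Since $E\in\wprec(\Domain)$, the right-hand side above can be written as $\exists x,t.\;[x\in E\wedge\NF(e,x,t)\wedge\pfunction{e}{}{}(x)\not\in E]$. The matrix is decidable: membership in $E$ is decidable, and once $\NF(e,x,t)$ holds the output is available. So $\compl{\bbbc(\cAA,i)}$ is {\re}, and a dovetailing procedure enumerates it.

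\textbf{Step 3 (hardness).} Choose $a\in E$, which is possible because $E\neq\varnothing$, and $b\in\Domain\setminus E$. Define $\pfunction{e}{}{}(x,y)=b$ if $x\in K\wedge y=a$, and undefined otherwise. This is computable by dovetailing, and the s-m-n theorem yields a total $s$ with $\pfunction{s(e,x)}{}{}(y)=\pfunction{e}{}{}(x,y)$.
\begin{itemize}
\item If $x\in K$, then $a\in E$ and $\pfunction{s(e,x)}{}{}(a)=b\not\in E$, so Step 1 gives $s(e,x)\in\compl{\bbbc(\cAA,i)}$.
\item If $x\not\in K$, then $s(e,x)$ is everywhere undefined, so both sides of the equation equal $E$ and $s(e,x)\in\bbbc(\cAA,i)$.
\end{itemize}
Hence $K\preceq_s\compl{\bbbc(\cAA,i)}$. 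Together with Step 2 this shows that $\compl{\bbbc(\cAA,i)}$ is creative; equivalently, $\compl{K}\preceq_s\bbbc(\cAA,i)$, so $\bbbc(\cAA,i)$ is productive.

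\textbf{Main obstacle.} The delicate point is the existence of $b\in\Domain\setminus E$, which the statement does not make explicit. If $E=\Domain$, then every $\auco{}{A}(X)=\Domain$, so $\cAA=\top$ and $\compl{\bbbc(\cAA,i)}=\varnothing$, which is not creative. I would therefore make explicit the standing assumption $\cAA\neq\top$ from Theorem~\ref{productivity}. Under that assumption $E\neq\Domain$ follows at once, because monotonicity forces $E\subseteq\auco{}{A}(W_o)\subsetneq\Domain$.
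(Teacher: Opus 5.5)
Your proof is correct and essentially matches the paper's. The paper takes $K\preceq\compl{\bbbc(\cAA,i)}$ from the first reduction in the proof of Theorem~\ref{productivity}. It gets computable enumerability from $\auco{}{A}(W_j)\neq\auco{}{A}(\varnothing)\Leftrightarrow W_j\not\subseteq\auco{}{A}(\varnothing)$ applied to $W_j=\pfunction{e}{}{}(\auco{}{A}(\varnothing))$, which is exactly your Step~1.

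Your remark on the choice of $b$ is also right: as literally stated the proposition fails for $\cAA=\top$, since then $\compl{\bbbc(\cAA,i)}=\varnothing$. The paper avoids this only by implicitly inheriting the hypothesis $\cAA\neq\top$ when it appeals to Theorem~\ref{productivity}.
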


\begin{proof}
    By Theorem~\ref{productivity} $K\preceq \compl{\bbbc(\cAA,i)}$, it is therefore sufficient to prove that $\compl{\bbbc(\cAA,i)}$ is {\re}. Given any $W_j\in\wpre(\Domain)$ we first prove that:
    \begin{equation}\label{usefuleq}
        \auco{}{A}(W_j)\neq\auco{}{A}(\varnothing)~\Leftrightarrow~W_j\not\subseteq\auco{}{A}(\varnothing)
    \end{equation}
    ($\Leftarrow$) Assume $W_j\not\subseteq \auco{}{A}(\varnothing)$ then let $c\in W_j\setminus \auco{}{A}(\varnothing)$. By $\auco{}{A}$ extensivity $c\in\auco{}{A}(W_j)$ and $c\not\in\auco{}{A}(\varnothing)$ hence $\auco{}{A}(\varnothing)\subset \auco{}{A}(W_j)$ which implies $\auco{}{A}(W_j)\neq\auco{}{A}(\varnothing)$. \\
    ($\Rightarrow$) Assume $W_j\subseteq \auco{}{A}(\varnothing)$ then by monotonicity and idempotency we have $\auco{}{A}(W_j)\subseteq\auco{}{A}(\varnothing)$. Because $\varnothing\subseteq W_j$ then by monotonicity $\auco{}{A}(\varnothing)\subseteq \auco{}{A}(W_j)$, which implies that $\auco{}{A}(\varnothing)=\auco{}{A}(W_j)$. 

    We consider the following partial computable function
    \[
    \pfunction{q}{}{}(a,b,y)\ud\left\{
    \begin{array}{ll}
        1 & \mbox{of $\exists z\in W_b.~\pfunction{a}{}{}(z){\downarrow}~\wedge~y=\pfunction{a}{}{}(z)$}\\
        \uparrow &\mbox{otherwise}
    \end{array} \right.
    \]
    If $e\in\ccL$ and $\auco{}{A}(\varnothing)=W_w\in\wprec(\Domain)$ for some index $w\in\ccL$, by the s-m-n theorem there exists a total computable function $s:\ccL\ra\ccL$ such that 
    $\pfunction{q}{}{}(e,w,y)=\pfunction{s(q,e,w)}{}{}(y)$ and by definition:
    \[
    \begin{array}{lll}
    W_{s(q,e,w)} &= & \sset{y}{\pfunction{s(q,e,w)}{}{}(y){\downarrow}}\\[1ex]
    &=& \sset{y}{\exists z\in W_w.\, \pfunction{e}{}{}(z){\downarrow}~\wedge~y=\pfunction{e}{}{}(z)}\\
    &=& \pfunction{e}{}{}(W_w).
    \end{array}
    \]
    Therefore whenever $W_i=\varnothing$ by (\ref{usefuleq}) we have $\compl{\bbbc(\cAA,i)}=\sset{e}{W_{s(q,e,w)}\not\subseteq W_w}$, which is a clearly {\re} set because $\auco{}{A}(\varnothing)=W_w\in\wprec(\Domain)$. \myqed
\end{proof}


\subsection{$\bbbc(\cAA,i)$ and $\compl{\bbbc(\cAA,i)}$ in the arithmetical hierarchy}

Although abstract interpretation theory (\cite{CC79}) is defined independently from the assumption of dealing with {\re} properties, this assumption is implicit in most of these works for the obvious reason that when applied to program analysis and verification, properties concern the semantics of programs, which are by definition {\re} sets. With this assumption we were able to prove that when the input property does not represent non-terminating computations, i.e., $W_i\neq\varnothing$ and $\cAA$ is not trivial then $\bbbc(\cAA,i)$ is bi-productive. This gives a lower bound in the arithmetic hierarchy for $\bbbc(\cAA,i)$ and its complement $\compl{\bbbc(\cAA,i)}$. In particular if we consider the predicate:
\[
\predC(e,i) \iff \auco{}{A}(\pfunction{e}{}{}(\auco{}{A}(W_i))) = \auco{}{A}(\pfunction{e}{}{}(W_i))
\]
then clearly $\predC(e,i)$ sits at level $\geq 2$ in the arithmetical hierarchy, i.e., $\predC(e,i)\in \Sigma_{n}^0\cup\Pi_{n}^0$ with $n\geq 2$. Instead, when $W_i=\varnothing$ we proved that $\compl{\bbbc(\cAA,i)}$ is a creative set, i.e., $\compl{\bbbc(\cAA,i)}$ is {\re} and $\bbbc(\cAA,i)$ is productive (co-{re}). This means that when $W_i=\varnothing$ then $\compl{\bbbc(\cAA,i)}\in\Sigma_1^0$ and $\bbbc(\cAA,i)\in\Pi_1^0$.

In order to prove more properties about local completeness classes, such as a tightest inclusion in the arithmetic hierarchy or the possibility of having a {\re} representation of the productive set $\bbbc(\cAA,i)$, 
we need more structure on the construction of the abstraction function, such that the fact that $\sset{\auco{}{\cAA}(W_i)}{i\in\ccL}$ is a uniformly {\re} sequence of recursive sets, as proved in Section \ref{AbsIntRec}.



Assume $\cAA$ be decidable and continuous with decidable corresponding uniform closure $r_{\cAA}$ and 
$h:\ccL\ra\ccL$ a total computable function such that 
    \[
     W_{h(e,i)} \;=\; \pfunction{e}{}{}(W_{i})
        \;=\; \sset{\pfunction{e}{}{}(x)}{x \in W_{i} \;\land\; \pfunction{e}{}{}(x){\downarrow}}.
    \]
In this case:
\[
  \predC_\cAA(e,i) \;\iff\; W_{r_{\cAA}(h(e,\,r_{\cAA}(i)))} \subseteq W_{r_{\cAA}(h(e,\,i)))}.
\]
It is clear that given a fixed index $i\in\ccL$:
\begin{align*}
  \bbbc(\cAA,i) \;&=\; \sset{e}{\predC(e,i)},\\[1ex]
\compl{\bbbc(\cAA,i)} \;&=\; \sset{e}{\neg\predC(e,i)},\\[1ex]
  \bbbc(\cAA) \;&=\; \sset{e}{\forall i.\; \predC(e,i)}, \\[1ex]
  \compl{\bbbc(\cAA)} \;&=\; \sset{e}{\exists i.\; \neg\predC(e,i)}.
\end{align*}

\begin{lemma}\label{hierarchy}
    If $\cAA$ is decidable and continuous then $\predC_\cAA(e,i) \in \Pi^0_2$.
\end{lemma}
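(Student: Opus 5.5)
The plan is to write $\predC_\cAA(e,i)$ as an inclusion between two {\re} sets and then read off the quantifier form. By the preceding Proposition, continuity and decidability of $\cAA$ yield a uniform closure operator $r_\cAA$, given by a total computable function, with $W_{r_\cAA(j)}=\auco{}{\cAA}(W_j)$ for every $j$. We also have the total computable $h$ with $W_{h(e,i)}=\pfunction{e}{}{}(W_i)$, which s-m-n provides exactly as in the construction of $q$ in the proof of the creativity proposition. So the indices
\[
u(e,i)\ud r_\cAA(h(e,r_\cAA(i))),\qquad v(e,i)\ud r_\cAA(h(e,i))
\]
are total computable functions of $(e,i)$. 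Since $W_i\subseteq\auco{}{\cAA}(W_i)$, monotonicity gives $W_{v(e,i)}\subseteq W_{u(e,i)}$ always. Local completeness is therefore equivalent to the reverse inclusion $W_{u(e,i)}\subseteq W_{v(e,i)}$, which is the displayed form of $\predC_\cAA$.

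Next I unfold the inclusion with Kleene's $\NF$ predicate:
\[
\predC_\cAA(e,i)\iff \forall x.\;\bigl(\exists t.\,\NF(u(e,i),x,t)\;\Rightarrow\;\exists s.\,\NF(v(e,i),x,s)\bigr).
\]
In prenex form this is $\forall x\,\forall t\,\exists s.\;\bigl(\neg\NF(u(e,i),x,t)\vee\NF(v(e,i),x,s)\bigr)$. The matrix is decidable, because $\NF$ is primitive computable and $u,v$ are total computable. I then contract $\forall x\,\forall t$ into one universal quantifier using a computable pairing function. This gives a $\forall\exists$ form over a decidable relation, so $\predC_\cAA\in\Pi^0_2$. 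The same argument works with $i$ fixed, giving $\bbbc(\cAA,i)\in\Pi^0_2$ and $\compl{\bbbc(\cAA,i)}\in\Sigma^0_2$.

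The one delicate point is that the totality of $u$ and $v$ depends on continuity. Without it, $\auco{}{\cAA}(W_j)$ need not be uniformly {\re}, and the membership $x\in\auco{}{\cAA}(\pfunction{e}{}{}(\cdots))$ could not be written as $\exists t.\,\NF$. Decidability of $\cAA$ is not essential for this upper bound, since the {\re} description of $W_{r_\cAA(\cdot)}$ already suffices. One could try to exploit decidability of each $W_{v(e,i)}$ to get rid of the inner $\exists s$, but that decision procedure is not uniform by Theorem~\ref{non-uniformrec}. So the $\Pi^0_2$ bound is the natural one to aim for.
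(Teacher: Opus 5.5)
Your proposal is correct and is essentially the paper's proof: you write $\predC_\cAA(e,i)$ as the inclusion $W_{u(e,i)}\subseteq W_{v(e,i)}$, unfold both memberships with Kleene's $\NF$ predicate, pass to the prenex form $\forall x\,\forall t\,\exists s$ over a decidable matrix, and merge the two universal quantifiers by pairing. Your extra observation that $W_{v(e,i)}\subseteq W_{u(e,i)}$ always holds, by extensivity and monotonicity of $\auco{}{\cAA}$, justifies reducing local completeness to a single inclusion, which the paper simply takes for granted.
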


\begin{proof}
The inclusion $W_{r_{\cAA}(h(e,\,r_{\cAA}(i)))} \subseteq W_{r_{\cAA}(h(e,\,i)))}$ corresponds to:
\[
  \predC_\cAA(e,i) \;\iff\;
  \forall x.\,\bigl(\pfunction{r_{\cAA}(h(e,r_{\cAA}(i)))}{}{}(x){\downarrow}
  \;\Ra\;
  \pfunction{r_{\cAA}(h(e,i))}{}{}(x){\downarrow}\bigr).
\]
By replacing convergence with Kleene's $\NF$-predicate we have:
\[
  \predC(e,i) \;\iff\;
  \forall x.\,\Bigl(
    \bigl(\forall s.\;\neg \NF\bigl(r_{\cAA}(h(e,r_{\cAA}(i))),\, x,\, s\bigr)\bigr)
    \;\lor\;
    \bigl(\exists t.\; \NF\bigl(r_{\cAA}(h(e,i)),\, x,\, t\bigr)\bigr)
  \Bigr).
\]
Then we obtain a prenex form as follows:
\[
  \predC(e,i) \;\iff\;
  \forall x.\;\forall s.\;\exists t.\;
  \Bigl(\neg \NF\bigl(r_{\cAA}(h(e,r_{\cAA}(i))),\, x,\, s\bigr)
  \;\lor\;
  \NF\bigl(r_{\cAA}(h(e,i)),\, x,\, t\bigr)\Bigr).
\]
By collapsing the two adjacent universal quantifiers via the pairing function we have:
\[
  \predC(e,i) \;\iff\;
  \forall \langle x, s \rangle.\;\exists t.\;
  R(e,\, i,\, x,\, s,\, t)
\]
where $R$ is a decidable predicate.  This proves that $\predC(e,i) \in \Pi^0_2$. \myqed
\end{proof}

It is worth noting that the properties of decidability, continuity and non triviality are shared by the abstract domains of most of the Galois-connection based abstract interpretations \cite{CGR18}.
The proof of the following theorem is therefore immediate by Lemma \ref{hierarchy} and it places all the non-trivial completeness and incompleteness classes into the arithmetic hierarchy.

\begin{theorem}
    If $\cAA$ is decidable, continuous and not trivial such that $W_i\neq\varnothing$ then $\bbbc(\cAA,i)\in\Pi^0_2$, $\bbbc(\cAA)\in\Pi^0_2$, $\compl{\bbbc(\cAA,i)}\in\Sigma^0_2$, and $\compl{\bbbc(\cAA)}\in\Sigma^0_2$.
\end{theorem}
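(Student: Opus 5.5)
The plan is to read off all four memberships from Lemma~\ref{hierarchy} together with the closure properties of the arithmetical hierarchy recalled in the preliminaries. The first step is to justify that $\predC_\cAA$ coincides with $\predC$. Decidability and continuity of $\cAA$ give, by the Proposition of Section~\ref{AbsIntRec}, a decidable uniform closure $r_{\cAA}$ with $W_{r_\cAA(j)}=\auco{}{A}(W_j)$. The map $h$ is total computable by a routine s-m-n construction, essentially $\pfunction{q}{}{}$ from the proof of the creativity proposition. Hence $W_{r_{\cAA}(h(e,r_\cAA(i)))}=\auco{}{A}(\pfunction{e}{}{}(\auco{}{A}(W_i)))$ and $W_{r_{\cAA}(h(e,i))}=\auco{}{A}(\pfunction{e}{}{}(W_i))$.

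The inclusion $\auco{}{A}(\pfunction{e}{}{}(W_i))\subseteq\auco{}{A}(\pfunction{e}{}{}(\auco{}{A}(W_i)))$ always holds, since $W_i\subseteq\auco{}{A}(W_i)$ and both $\pfunction{e}{}{}(\cdot)$ (as image) and $\auco{}{A}$ are monotone. Therefore the equality defining local completeness reduces to the single inclusion used in $\predC_\cAA$. This yields $\bbbc(\cAA,i)=\sset{e}{\predC_\cAA(e,i)}$.

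With $i$ fixed, Lemma~\ref{hierarchy} gives a decidable $R$ such that
\[
e\in\bbbc(\cAA,i)\iff\forall\langle x,s\rangle.\,\exists t.\,R(e,i,x,s,t).
\]
Fixing $i$ as a parameter keeps $R$ decidable in $(e,x,s,t)$, so $\bbbc(\cAA,i)\in\Pi^0_2$. Complementation then gives $\compl{\bbbc(\cAA,i)}\in\Sigma^0_2$.

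For the global class, $e\in\bbbc(\cAA)\iff\forall i.\,\forall\langle x,s\rangle.\,\exists t.\,R(e,i,x,s,t)$. Contracting the three universal quantifiers $i,x,s$ into one by pairing keeps the prefix $\forall\exists$, so $\bbbc(\cAA)\in\Pi^0_2$ and, by complementation, $\compl{\bbbc(\cAA)}\in\Sigma^0_2$.

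The main point needing care is uniformity. Lemma~\ref{hierarchy} must produce one decidable $R$ working jointly in $(e,i)$, not a separate predicate for each $i$, or else the quantification over $i$ for $\bbbc(\cAA)$ fails. This holds because $r_\cAA$ and $h$ are total computable and $\NF$ is primitive recursive, so $R$ is decidable in all arguments.

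The hypotheses of non-triviality and $W_i\neq\varnothing$ are not used for the upper bounds. They only ensure, via Theorem~\ref{productivity}, that the classes are not in $\Sigma^0_1\cup\Pi^0_1$, so that the placement is meaningful. I would note this remark rather than prove tightness.
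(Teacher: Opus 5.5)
Your proposal is correct and takes essentially the paper's approach: the paper simply declares the theorem immediate from Lemma~\ref{hierarchy}, and you fill in the two details it leaves implicit, namely that local completeness reduces to the single inclusion $\predC_\cAA$ (the reverse inclusion being automatic), and that the decidable matrix $R(e,i,x,s,t)$ is uniform in $i$, so the extra $\forall i$ for $\bbbc(\cAA)$ merges into the leading universal block. One small caveat, affecting only your closing remark: the lower bound for $\bbbc(\cAA,i)$ via Theorem~\ref{productivity} also requires $W_i\neq\auco{}{A}(W_i)$, which is not among the stated hypotheses, and indeed if $W_i=\auco{}{A}(W_i)$ then $\bbbc(\cAA,i)=\ccL$.
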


\section{Computable coverings and program transformations}

The interesting consequence of decidable abstract domains---as those employed in static program analysis, is that it is possible to algorithmically extract a {\re} (and therefore also a decidable) covering of the productive set $\bbbc(\cAA,i)$. In general, this cannot be achieved for \(\compl{\bbbc(\cAA,i)}\). The reason is that, for every program \(e \in \ccL\), there may be a potentially infinite family of effective transformations mapping $e$ to programs $f(e)$ such that \(f(e) \in \compl{\bbbc(\cAA,i)}\). Moreover, different choices of $f(e)$ may compute different functions $\pfunction{f(e)}{}{}$, so there is no canonical effective transformation that uniformly represents all such cases. We show that there exists a transformation that employs a constant growth in program size and minimal computational complexity.

\subsection{Computable coverings of $\bbbc(\cAA,i)$}
For $\bbbc(\cAA,i)$ it is possible to algorithmically ``cancel'' the outputs causing incompleteness from any partial computable function, yet obtaining an effective enumeration of a set of programs that represents all partial computable functions computable by programs in $\bbbc(\cAA,i)$.  
This is obtained by removing from the range of $\pfunction{e}{}{}$ all those outputs of $e$ that, coming from the approximation of $W_i$ according to $\auco{}{\cAA}$, namely from $\auco{}{\cAA}(W_i)$, end up outside the approximation of the range of $e$ on input $W_i$, i.e., 
outside $\auco{}{\cAA}(\pfunction{e}{}{}(W_i))$.
We prove that this program transformation is effective, therefore producing a {\re} cover for $\bbbc(\cAA,i)$.

\begin{theorem}\label{covering}
    If $\cAA$ is decidable and continuous then $\bbbc(\cAA,i)$ admits a {\re} covering, i.e., there exists $\cC_i\subseteq \bbbc(\cAA,i)$ such that $\cC_i$ is {\re} and $\cC_i$ and $\bbbc(\cAA,i)$ are extensionally equivalent.
\end{theorem}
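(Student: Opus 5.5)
The plan is to build a total computable program transformation $f:\ccL\ra\ccL$ that ``repairs'' local incompleteness, and to take $\cC_i\ud\range(f)=f(\ccL)$, which is {\re} automatically. Two things must be shown: (a) every $f(e)$ lies in $\bbbc(\cAA,i)$; (b) for every $e\in\bbbc(\cAA,i)$ we have $\pfunction{f(e)}{}{}\cong\pfunction{e}{}{}$. Together these give extensional equivalence of $\cC_i$ and $\bbbc(\cAA,i)$. Strict inclusion can then be arranged by padding, since $\bbbc(\cAA,i)$ is infinite by Proposition~\ref{basic-prop} and extensional. For instance, let $\cC_i$ contain only the images $f(e)$ of a fixed syntactic shape, and exclude a padded variant of ${\tt id}$.

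The transformation follows the informal description before the statement. Let $r=r_{\cAA}$ be the decidable uniform closure operator given by the Proposition of Section~\ref{AbsIntRec}, and let $h$ satisfy $W_{h(e,i)}=\pfunction{e}{}{}(W_i)$. For fixed $i$, I define by the s-m-n theorem
\[
\pfunction{f(e)}{}{}(x)\ud\left\{\begin{array}{ll}
\pfunction{e}{}{}(x) & \mbox{if $\pfunction{e}{}{}(x){\downarrow}$ and $\pfunction{e}{}{}(x)\in W_{r(h(e,i))}$}\\
\uparrow & \mbox{otherwise.}
\end{array}\right.
\]
In words, $f(e)$ cancels every output that falls outside $\auco{}{\cAA}(\pfunction{e}{}{}(W_i))$. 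This is partial computable, because $W_{r(h(e,i))}$ is {\re} uniformly in $e$. Decidability of each $W_{r(j)}$ is not needed for membership; continuity is what gives the uniform enumeration.

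For (b): if $e\in\bbbc(\cAA,i)$, then every output of $e$ on $\auco{}{\cAA}(W_i)$ already lies in $\auco{}{\cAA}(\pfunction{e}{}{}(\auco{}{\cAA}(W_i)))=\auco{}{\cAA}(\pfunction{e}{}{}(W_i))$. So nothing is cancelled there. Outputs on inputs outside $\auco{}{\cAA}(W_i)$, however, may be cancelled, and then $f(e)\not\cong e$.

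This is the main obstacle, and I would resolve it by restricting the filter to inputs in $\auco{}{\cAA}(W_i)=W_{r(i)}$: run $e$ unchanged when $x\notin W_{r(i)}$, and apply the test above when $x\in W_{r(i)}$. Here decidability of $\auco{}{\cAA}(W_i)$ is genuinely used, since we must branch on $x\in W_{r(i)}$ for a fixed $i$, not uniformly in $i$. With this refinement, (b) holds for every $e\in\bbbc(\cAA,i)$.

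For (a), write $g=\pfunction{f(e)}{}{}$. On $W_i\subseteq\auco{}{\cAA}(W_i)$, every output of $e$ on $W_i$ lies in $\pfunction{e}{}{}(W_i)\subseteq\auco{}{\cAA}(\pfunction{e}{}{}(W_i))$ by extensivity, so none is cancelled and $g(W_i)=\pfunction{e}{}{}(W_i)$. On $\auco{}{\cAA}(W_i)$, every surviving output lies in $\auco{}{\cAA}(\pfunction{e}{}{}(W_i))=\auco{}{\cAA}(g(W_i))$. Monotonicity and idempotence then give $\auco{}{\cAA}(g(\auco{}{\cAA}(W_i)))\subseteq\auco{}{\cAA}(g(W_i))$. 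The reverse inclusion follows from $W_i\subseteq\auco{}{\cAA}(W_i)$. Hence $f(e)\in\bbbc(\cAA,i)$.

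Finally, $\cC_i=\range(f)$ is {\re}, and the theorem follows.
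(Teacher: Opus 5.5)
Your proposal is correct and follows essentially the same route as the paper. The paper's $\pfunction{q}{}{}(e,x)$ runs $e$ unchanged when $x\notin\auco{}{\cAA}(W_i)$, which is decidable for the fixed $i$, and otherwise releases $\pfunction{e}{}{}(x)$ only once dovetailing and continuity show it lies in $\auco{}{\cAA}(\pfunction{e}{}{}(W_i))$; it then takes $\cC_i$ to be the range of the s-m-n transformation $e\mapsto g(q,e)$ and proves your facts (a) and (b). Using the uniform closure $r_{\cAA}$ instead of the explicit dovetailing loop only repackages this, and your explicit observation $\pfunction{f(e)}{}{}(W_i)=\pfunction{e}{}{}(W_i)$ makes step (a) slightly cleaner than the paper's intersection computation, which uses that fact only implicitly.
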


\begin{proof}
    For any $i,e\in\ccL$ by the s-m-n theorem we have that there exists a total computable function $h:\ccL\ra\ccL$ such that
    \[
     W_{h(e,i)} \;=\; \pfunction{e}{}{}(W_{i})
        \;=\; \sset{\pfunction{e}{}{}(x)}{x \in W_{i} \;\land\; \pfunction{e}{}{}(x){\downarrow}}.
    \] 
    Fix $i\in\ccL$ and assume that $\auco{}{A}$ is a decidable closure operator, i.e., 
    $\auco{}{\cAA}(W_i)$ is a fixed decidable set. Define the partial computable function
    \begin{equation}\label{covering-function}
    \pfunction{q}{}{}(e,x)=\left\{
    \begin{array}{cl}
       \pfunction{e}{}{}(x) &\mbox{if $x\not\in \auco{}{A}(W_{i})$}\\[1ex]
       \pfunction{e}{}{}(x) &\mbox{if $x\in \auco{}{A}(W_{i})~\wedge~\pfunction{e}{}{}(x){\downarrow}~\wedge~\pfunction{e}{}{}(x)\in \auco{}{A}(W_{h(e,i)})$}\\[1ex]
       \uparrow & \mbox{otherwise}
    \end{array}\right .
    \end{equation}
    Because when $i\in\ccL$ is fixed $\auco{}{A}(W_{i})$ is decidable and $\auco{}{A}$ is monotone and extensive, then the function $\pfunction{q}{}{}$ is computable by the program $q$ in Fig.~\ref{code1}. In Fig.~\ref{code1} we abuse notation and denote with $e(x)$ the code of $e\in\ccL$ specialized for the input $x$.
    \begin{figure}[t]
    \begin{align*}
    &1:~{\tt input}(e,x);\\
    &2:~{\tt if}\ x \notin \auco{}{A}(W_i)\ {\tt then}\ e(x)\\
    &3:~\qquad\! {\tt else}\ n := e(x);\\
    &4:~\qquad\qquad W := \varnothing;\ t := 0;\\
    &5:~\qquad\qquad {\tt while}\ n \notin W\ \{\\
    &6:~\qquad\qquad\qquad t := t + 1;\\
    &7:~\qquad\qquad\qquad W := \dovetl(h(e,i),t);\\
    &8:~\qquad\qquad\qquad W := \auco{}{\mathcal{A}}(W)\\
    &9:~\qquad\qquad \};\ e(x).
    \end{align*}
    \caption{An implementation of $\pfunction{q}{}{}$.}\label{code1}
    \end{figure}
    Because $\auco{}{A}$ is monotone and for all $t\in\omega$:
    $\dovetl(h(e,i),t)\subseteq W_{h(e,i)}$ and $\dovetl(h(e,i),t)\subseteq \dovetl(h(e,i),t+1)$
    then for all $t\in\omega$
    $$\auco{}{\cAA}(\dovetl(h(e,i),t))\subseteq \auco{}{\cAA}(\dovetl(h(e,i),t+1))\subseteq \auco{}{A}(W_{h(e,i)}).$$ 
    When $\pfunction{e}{}{}(x){\downarrow}$ and $n=\pfunction{e}{}{}(x)$ (i.e., we can proceed with program line 4 in Fig.~\ref{code1}), by $\auco{}{\cAA}$ continuity and $\dovetl$ definition, we have that the {\tt while}-loop at line 5 in Fig.~\ref{code1} terminates if and only if: 
    \[
    \begin{array}{lll}
    \exists t.\; n\in\auco{}{\cAA}(\dovetl(h(e,i),t)) &\Leftrightarrow&\pfunction{e}{}{}(x)\in\cup_{t\in\omega}\auco{}{\cAA}(\dovetl(h(e,i),t))\\[1ex]
    &\Leftrightarrow&\pfunction{e}{}{}(x)\in\auco{}{\cAA}(\cup_{t\in\omega}\dovetl(h(e,i),t))\\[1ex]
    &\Leftrightarrow&\pfunction{e}{}{}(x)\in\auco{}{\cAA}(\pfunction{e}{}{}(W_i)).
    \end{array}
    \]
    Therefore if $\pfunction{e}{}{}(x)\in\auco{}{\cAA}(\pfunction{e}{}{}(W_i))$ then we have
    $\pfunction{q}{}{}(e,x)=\pfunction{e}{}{}(x)$.\\
    If instead for all $t\in\omega: n\not\in\auco{}{\cAA}(\dovetl(h(e,i),t))$ then $\pfunction{q}{}{}(e,x){\uparrow}$.
    Note that when $\pfunction{q}{}{}(e,x){\downarrow}~\Ra\pfunction{e}{}{}(x){\downarrow}$. The reversal does not hold because $\pfunction{q}{}{}(e,x)$ cancels (being undefined) all the outputs that may generate incompleteness. 
    
    By the s-m-n theorem there exists a total computable function $g:\ccL\ra\ccL$ such that for any $e\in\ccL$: $\pfunction{q}{}{}(e,x)=\pfunction{g(q,e)}{}{}(x)$. We set $\cC_i\ud \sset{g(q,e)}{e\in\ccL}$. $\cC_i$ is clearly {\re} being the range of a total computable function. We prove the following two facts: (1) $\cC_i\subseteq \bbbc(\cAA,i)$ and (2) $\sset{\pfunction{e}{}{}}{e\in\cC_i}=\sset{\pfunction{e}{}{}}{e\in\bbbc(\cAA,i)}$.
    \begin{enumerate}
        
    
        \item We prove that for all $e\in\ccL$: $g(q,e)\in \bbbc(\cAA,i)$, namely:
        \[
            \auco{}{A}(\pfunction{g(q,e)}{}{}(W_i)) 
                = \auco{}{A}(\pfunction{g(q,e)}{}{}(\auco{}{A}(W_i))).
        \]
        If \(x \in \auco{}{A}(W_i)\), then either \(\varphi_{g(q,e)}(x) = \varphi_e(x)\) and \(\varphi_e(x) \in 
        \auco{}{A}(W_{h(e,i)})\) or \(\pfunction{g(q,e)}{}{}(x)\!\uparrow\). Consequently,
        \begin{align*}
            \pfunction{g(q,e)}{}{}(\auco{}{A}(W_i))
                &= \{\varphi_e(x) \;\vert\; x \in \auco{}{A}(W_i) 
                    \land \varphi_e(x) \in \auco{}{A}(W_{h(e,i)})\} \\
            &= \varphi_e(\auco{}{A}(W_i)) \cap \auco{}{A}(W_{h(e,i)}) \\
            &= \varphi_e(\auco{}{A}(W_i)) \cap \auco{}{A}(\varphi_e(W_i)).
        \end{align*}
        By monotonicity of \(\auco{}{A}\): \(\varphi_e(W_i) \subseteq
        \auco{}{A}(\varphi_e(W_i))\). Additionally, \(W_i \subseteq \auco{}{A}(W_i)\)
        implies \(\varphi_e(W_i) \subseteq \varphi_e(\auco{}{A}(W_i))\). It then holds
        that
        \[
            \varphi_e(W_i) 
                \subseteq \varphi_e(\auco{}{A}(W_i)) \cap \auco{}{A}(\varphi_e(W_i))
                \subseteq \auco{}{A}(\varphi_e(W_i)).
        \]
        Again, since \(\auco{}{A}\) is monotonic, order is preserved if \(\auco{}{A}\)
        is applied term-wise:
        \[
            \auco{}{A}(\varphi_e(W_i)) 
                \subseteq \auco{}{A}\big(\varphi_e(\auco{}{A}(W_i))
                    \cap \auco{}{A}(\varphi_e(W_i))\big)
                \subseteq \auco{}{A}(\auco{}{A}(\varphi_e(W_i))).
        \]
        Substituting equivalent terms and eliminating, by idempotence of \(\auco{}{A}\),
        the extra \(\auco{}{A}\) on the right, we obtain
        \[
            \auco{}{A}(\varphi_{g(q,e)}(W_i)) 
                \subseteq \auco{}{A}(\varphi_{g(q,e)}(\auco{}{A}(W_i)))
                \subseteq \auco{}{A}(\varphi_{g(q,e)}(W_i))
        \]
        which implies
        \[
            \auco{}{A}(\pfunction{g(q,e)}{}{}(W_i)) 
                = \auco{}{A}(\pfunction{g(q,e)}{}{}(\auco{}{A}(W_i))).
        \]
        \color{black}
        \item By definition $\cC_i$ covers $\bbbc(\cAA,i)$ if $e\in\bbbc(\cAA,i)~\Ra~\pfunction{g(q,e)}{}{}\cong\pfunction{e}{}{}$. If $x\not\in \auco{}{\cAA}(W_{i})$ then by definition
        $\pfunction{g(q,e)}{}{}\cong\pfunction{e}{}{}$ (line 2 in Fig.~\ref{code1}). If instead $x\in \auco{}{\cAA}(W_{i})$ then if $\pfunction{e}{}{}(x){\downarrow}$ then because $e\in\bbbc(\cAA,i)$ by $\auco{}{\cAA}$ extensivity we have:
        \[
        \pfunction{e}{}{}(x)\in\pfunction{e}{}{}(\auco{}{\cAA}(W_{i}))
        \subseteq \auco{}{\cAA}(\pfunction{e}{}{}(\auco{}{\cAA}(W_{i})))=
        \auco{}{\cAA}(\pfunction{e}{}{}(W_{i}))
        \]
        and therefore by definition of $\pfunction{g(q,e)}{}{}$ (the loop at line 5--8 in Fig.~\ref{code1} terminates) we have $\pfunction{g(q,e)}{}{}(x)=\pfunction{e}{}{}(x)$ (line 9 in Fig.~\ref{code1}). If instead $\pfunction{e}{}{}(x){\uparrow}$ then clearly 
        $\pfunction{g(q,e)}{}{}(x){\uparrow}$ and again $\pfunction{g(q,e)}{}{}\cong\pfunction{e}{}{}$.
    \end{enumerate}
    \myqed
\end{proof}

The theorem above says that the local completeness class $\bbbc(\cAA,i)$, when it is not straightforward, admits a {\re} cover when $\cAA$ is a decidable and continuous abstract domain. $\cC_i$ is a family of programs that fully represent the functions computed by the programs in $\bbbc(\cAA,i)$ and these programs can be effectively build by the total computable function $\lambda e.\,g(q,e):\ccL\ra\ccL$ that transforms any, possibly incomplete, program $e$ into a complete one $g(q,e)$ by canceling those outputs that produce incompleteness. Here decidability plays a key role because it allows us to decide whether any concrete state $x$ belongs to the abstract input property $\auco{}{\cAA}(W_i)$. Without this assumption the function $\pfunction{g(q,e)}{}{}$ is not computable and hence the map $e\mapsto g(q,e)$ does not exist.

\begin{example}
    It is easy to see how the program transformation $g$ in Theorem~\ref{covering} removes the sources of incompleteness. For example for the interval abstract domain with $W_i=\{0,10\}$ and $W_o=\{100\}$, for any program $e$ implementing the archetypal incomplete function (\ref{incomplete-prog}) used in Theorem~\ref{Decidable-C}, e.g.,
    the program 
     \begin{align*}
    &1:~{\tt input}(x);\\
    &2:~{\tt if}\ x = 7\ {\tt then}\ 100\\
    &3:~\qquad\! {\tt else}\ {\tt while}\ {\tt true}\ \{{\tt skip}\}.
    \end{align*}
    we have that $\pfunction{g(q,e)}{}{}=\pfunction{*}{}{}$ and $*\in\bbbc(\cAA,i)$. In this case this is due to the non termination of the {\tt while}-loop between lines 5--8 in Fig.~\ref{code1}, because $W_{h(e,i)}=\pfunction{e}{}{}(\{0,10\})=\varnothing$ and $\auco{}{{\tt Int}}(\varnothing)=\varnothing$.
\end{example}

Interestingly, Theorem \ref{covering} gives also the very first fix-point characterization of local completeness: $\bbbc(\cAA,i)=\sset{e}{\pfunction{e}{}{}\cong\pfunction{g(q,e)}{}{}}$. This immediately implies that $\bbbc(\cAA,i)\neq\varnothing$, $|\bbbc(\cAA,i)|=\aleph_0$, and  $\bbbc(\cAA,i)\in\Pi^0_2$. 

\smallskip

Because it is known that every {\re} set of indexes of partial computable functions $S$ admits a decidable subset $B\subseteq S$ which covers $S$ \cite{odifreddi}, and this set $B$ can be effectively enumerated, then the following corollary follows immediately.

\begin{corollary}
    In the hypothesis of Theorem~\ref{covering}, $\bbbc(\cAA,i)$ has a recursive cover.
\end{corollary}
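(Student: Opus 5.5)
The plan is to derive the recursive cover from the {\re} covering $\cC_i=\sset{g(q,e)}{e\in\ccL}$ of Theorem~\ref{covering}, using the standard fact that every {\re} set of indices $S$ contains a decidable subset $B\subseteq S$ extensionally equivalent to $S$. The cover is then $B\subseteq\cC_i\subseteq\bbbc(\cAA,i)$. Extensional equivalence is transitive, and $\cC_i$ is extensionally equivalent to $\bbbc(\cAA,i)$, so $B$ is too.

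I will make the decidable subset concrete rather than only citing \cite{odifreddi}. The total computable map $\lambda e.\,g(q,e)$ is obtained from the s-m-n theorem, which can always be chosen injective and strictly increasing in its argument. Padding supplies this: replace $g(q,e)$ by an index $\hat g(e)$ computing the same function, with $\hat g(e)>\hat g(e')$ for all $e'<e$. We find it by effectively searching through padded versions of $g(q,e)$ until one exceeds all previously produced values. Set $B\ud\sset{\hat g(e)}{e\in\ccL}$. Then $B$ is decidable: to decide whether $j\in B$, compute $\hat g(0),\hat g(1),\ldots$ until some value is $\geq j$, and check whether $j$ appeared. Since $\pfunction{\hat g(e)}{}{}\cong\pfunction{g(q,e)}{}{}$, we get $B\subseteq\bbbc(\cAA,i)$ by the extensionality in Proposition~\ref{basic-prop}(i). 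We also get $\sset{\pfunction{j}{}{}}{j\in B}=\sset{\pfunction{j}{}{}}{j\in\cC_i}$.

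The step needing care is the monotone padding. We need a total computable padding function $\mathit{pad}(j,n)$ that is injective in $n$ and preserves semantics, so that the search for a large enough padded index always terminates. In $\Imp$ this is immediate: prepend $n$ copies of ${\tt skip}$, which yields infinitely many distinct, strictly larger codes. By Rogers' isomorphism it transfers to any acceptable system. Strict inclusion $B\subsetneq\bbbc(\cAA,i)$, as the paper's definition of covering requires, also follows. $B$ is decidable while $\bbbc(\cAA,i)$ is productive by Theorem~\ref{productivity} (or merely not decidable, by Theorem~\ref{Decidable-C}, in the non-trivial case), so the two sets cannot coincide. In the trivial case, strictness can be forced by omitting one padded index from $B$.
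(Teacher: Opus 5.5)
Your argument is correct and has the same skeleton as the paper's: pass from the c.e.\ cover $\cC_i$ of Theorem~\ref{covering} to a decidable one. The difference is that your explicit padding step is not cosmetic; it is what makes the argument valid. The paper, and your first paragraph, appeal to the claim that every c.e.\ set of indices $S$ has a decidable subset $B\subseteq S$ computing the same functions. That claim is false in general. By s-m-n, take $c$ total computable with $\varphi_{c(k)}=\lambda x.\,k$; $c$ is automatically injective. Let $S=c(K)$. Then $S$ is c.e., and each function occurs at exactly one index of $S$. So the only subset of $S$ extensionally equivalent to $S$ is $S$ itself, which is undecidable since $k\in K\Leftrightarrow c(k)\in S$.

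The correct version of the fact gives a decidable $B$ inside the extensional closure of $S$, not inside $S$. Here that closure lies in $\bbbc(\cAA,i)$ by Proposition~\ref{basic-prop}(i). This is exactly what your second paragraph does: your padded indices $\hat g(e)$ are in general not elements of $\cC_i$, and you place them in $\bbbc(\cAA,i)$ by extensionality. So drop the chain $B\subseteq\cC_i\subseteq\bbbc(\cAA,i)$ and the general claim behind it. Alternatively, note that $\hat g$ is itself a legitimate choice of the s-m-n function in Theorem~\ref{covering}. Its proof then goes through verbatim, and the resulting $\cC_i=\{\hat g(e)\mid e\in\ccL\}$ is already decidable.

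Your treatment of strict inclusion, which the paper's definition of covering requires but its proof does not address, is fine. In the trivial case, add one line explaining why deleting a single index preserves coverage. Every function of the class equals $\varphi_{\hat g(e)}$ for each of its infinitely many indices $e\in\bbbc(\cAA,i)$, and $\hat g$ is injective, so each function occurs at infinitely many indices of $B$.
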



\subsection{Making programs incomplete by regular translations}

We cannot in general build a {\re} cover for $\compl{\bbbc(\cAA,i)}$. However
Theorem \ref{Decidable-C} shows that 
incompleteness can occur in even the simplest of programs. We show that
every non-trivial abstraction is incomplete for some program as simple as a decider of a regular 
language---the simplest in Chomsky's hierarchy. 

To ensure a program is computationally as simple as a regular language decider, it sufficient to
construct the program in those very terms. In particular, we will construct a program from the
standard model for deciding regular languages, the deterministic finite automata (DFA). 
%
We use a standard formulation of DFAs. A DFA is a 5-tuple \(\langle S,Q,F,q_0,\delta\rangle\)
where \( S\) is a finite alphabet of symbols, \(Q\) a finite set of states, \(F \subseteq Q\) a
subset of states designated as `accepting', \(q_0 \in Q\) the start state, and 
\(\delta: (Q \times  S) \to Q\) the function that defines the state transition rules
A DFA can be defined by its characteristic function \(M:  S^* \to \{0,1\}\) where \( S^*\) is
the set of finite-length strings comprised of symbols in \( S\) with typical elements $\sigma$. The characteristic function and
the 5-tuple can be connected operationally via the helper function \(\hat{\delta}: ( S^* \times Q) \to Q\) defined recursively as usual:
\[
    \begin{cases}
        \hat{\delta}(\varepsilon, q) = q,\\
        \hat{\delta}(x \cdot \sigma, q) = \hat{\delta}(\sigma, \delta(x, q))
    \end{cases}
\]
so that \(M\) can be defined
\[
    M(\sigma) \defeq \begin{cases}
        1 & \hat{\delta}(\sigma, q_0) \in F,\\
        0 & \hat{\delta}(\sigma, q_0) \notin F.
    \end{cases}
\]
It is known that the set of regular languages is exactly the set of all languages \(L\subseteq S^*\) for which there exists
a DFA \(M\) such that for any string \(\sigma\), we have \(\sigma \in L\) if and only if \(M(\sigma) = 1\).

Let \(S\) be an alphabet of symbols used for encoding and \(\tau: \Domain \to  S^*\) a linear time string
encoding for $\Domain$, then a function \(f_M: \Domain^3 \to \Domain\) is a {\em regular function\/} if, for some DFA \(M\), it can be defined 
\[
    f_{M}(a,e,x) = \begin{cases}
        a & M(\tau(x)) = 1,\\
        e & M(\tau(x)) = 0
    \end{cases}
\]
A regular program is any program that implements a regular function.

In our context and from the construction of Theorem \ref{Decidable-C} we can define the following partial computable function as implemented in Fig. \ref{code3}:
\[
\pfunction{p}{}{}(a,b,e,x)=\left\{
\begin{array}{cl}
b & \mbox{if $x=a$}\\
\pfunction{e}{}{}(x) &\mbox{otherwise}
\end{array}\right .
\]
By the s-m-n theorem there exists a total computable function $g$ such that for fixed $a,b\in\Domain$ and $e\in\ccL$, $\pfunction{p}{}{}(a,b,e,x) = \pfunction{g(p,a,b,e)}{}{}(x)$. By what proved in Theorem \ref{Decidable-C} it is immediate to see that when $\cAA$ is not straightforward and $W_o\in\wpre(\Domain)$ such that $b\in\Domain\setminus\auco{}{A}(W_o)$ and $a\in \auco{}{A}(W_i)\setminus W_i$, then for all $e\in\ccL$ we have that $g(p,a,b,e)\in \compl{\bbbc(\cAA,i)}$. 
\begin{figure}[t]
    \begin{align*}
    &1:~{\tt input}(a,b,e,x);\\
    &2:~{\tt if}\ x = a\ {\tt then}\ b\\
    &3:~\qquad\qquad {\tt else}\ e(x).
    \end{align*}
    \caption{An implementation of $\pfunction{p}{}{}$.}\label{code3}
    \end{figure}
It is immediate to note that there exists a DFA $M_{L}$ that recognizes the regular language  $L=\{\tau(a)\}$ or more in general $L=\tau(R)$ for any regular language $R\subseteq \auco{}{\cAA}(W_i)\setminus W_i$, such that 
$\pfunction{g(p,a,b,e)}{}{}(x)= f_{M_{L}}(b,\pfunction{e}{}{}(x),x)$. The function $g$ mapping $e$ into $g(p,a,b,e)$ can therefore be implemented as a regular program, returning a transformed program whose running time has a minimal linear overhead over the complexity of $e$.

\section{Conclusion}

We studied local completeness of abstract interpretations in the standard computability context. In particular we analyzed the complexity of the classes of programs having a locally complete and incomplete abstract interpretation. We first reconciled abstract interpretation with standard computability theory by defining the notion of uniform closure operators as computable program transformations acting as semantic approximations of {\re} sets, and proved that any Galois connection-based abstract domain naturally induces a uniform closure. We then observed, via a simple application of Rice's theorem, that no such closures can correspond to a uniformly decidable abstraction unless the abstraction is trivial. This provides a precise computability-theoretic account of the familiar distinction between the static and dynamic use of abstractions.

Upon these bases we proved that the classes of locally complete and incomplete abstract interpretations are in general recursively inseparable being both productive sets. Moreover, although a {\re} covering cannot in general be obtained for the incompleteness case, it can be effectively constructed for the completeness case. This result opens a promising direction for future work on the effective identification of representative code blocks for which a given abstraction is complete/precise. In particular, our covering function provides an elegant fix-point characterization of the classes of locally complete programs, paving the road to systematic code repair strategies to improve the precision of abstract interpretations by program transformation. In particular, because local completeness is essential to make abstract interpreters compositional, this result can be used to solve the problem of decomposing a program $p$ into suitable subprograms $p_1,\ldots,p_n$ such that the composition of their abstract interpretations minimizes incompleteness, as suggested in \cite{GiacobazziR25}. Such blocks could guide targeted refinements toward the portions of code responsible for imprecision, enabling more precise analyses without requiring a global refinement of the abstract domain. We believe that a similar construction can be used to obtain a covering for the class of partially complete abstract interpretations as introduced in \cite{CDG22}. In this case the program transformation can be used to compress the imprecision of the analysis below some given error bound.

Interestingly, the program transformation that provides a fix-point characterization of the covering for the class of locally complete programs in Theorem \ref{covering} has a clear correspondence with the {\em Relax\/} rule of LCL \cite{BruniGGR21}, a logic guaranteeing local completeness. The {\em Relax\/} rule is a key rule in LCL to ensure that pre/post conditions remain inside specific bounds in order to prove local completeness. This rule constrains the under-approximating post-condition to have the same abstraction as the strongest post-condition, which is needed for preserving local completeness, in a similar way as played by the condition $x\in \auco{}{A}(W_{i})~\wedge~\pfunction{e}{}{}(x){\downarrow}~\wedge~\pfunction{e}{}{}(x)\in \auco{}{A}(W_{h(e,i)})$ in (\ref{covering-function}).
An interesting open question is then whether LCL is sound and complete for the {\re} covering $\cC_i$ of $\bbbc(\cAA,i)$.
Other open questions remain such as the role of $\bbbc(\cAA,i)$ and $\compl{\bbbc(\cAA,i)}$ respectively in $\ok{\Pi^0_2}$ and $\ok{\Sigma^0_2}$. In this case we believe that $\bbbc(\cAA,i)$ is not $\ok{\Pi^0_2}$-hard.


\bibliographystyle{splncs04}

\end{document}